\newcommand{\eps}{\varepsilon}
\newcommand{\ver}[1]{}
\newcommand{\oldver}[1]{}
\newcommand{\koz}[1]{}
\declaretheorem[name=Theorem,numberwithin=section]{thm}
\declaretheorem[name=Lemma,numberwithin=section]{lemma}
\declaretheorem[name=Proposition,numberwithin=section]{proposition}
\declaretheorem[name=Definition, numberwithin=section]{definition}
\title{On Slepian--Wolf Theorem with Interaction}
\author{Alexander Kozachinskiy\footnotemark[1]\\[2mm]
\footnotemark[1]~{Moscow State University, Faculty of Mechanics and Mathematics}\\{kozlach@mail.ru}}
\begin{document}
\maketitle
\begin{abstract}
In this paper we study interactive ``one-shot'' analogues of the classical Slepian-Wolf theorem. Alice receives a value of a random variable $X$, Bob receives a value of another random variable $Y$ that is jointly distributed with $X$. Alice's goal is to transmit $X$ to Bob (with some error probability $\varepsilon$). Instead of one-way transmission, which is studied in the classical coding theory, we allow them to interact. They may also use shared randomness.

We show, that Alice can transmit $X$ to Bob in expected $H(X|Y) + 2\sqrt{H(X|Y)} + O(\log_2\left(\frac{1}{\varepsilon}\right))$ number of bits. Moreover, 
we  show that  every one-round protocol $\pi$ with information complexity $I$ 
can be compressed to the (many-round) protocol 
with expected communication about $I + 2\sqrt{I}$ bits.
This improves a result by Braverman and Rao \cite{braverman2011information}, where they had $5\sqrt{I}$. 
Further, we show how to solve this problem (transmitting $X$) using $3H(X|Y) + O(\log_2\left(\frac{1}{\varepsilon}\right))$ bits and $4$ rounds on average. This improves a result of~\cite{brody2013towards}, where they had 
$4H(X|Y) + O(\log1/\varepsilon)$  bits and 10 rounds on average.

In the end of the paper we discuss how many bits Alice and Bob may need to communicate on average besides $H(X|Y)$. The main question is whether the upper bounds mentioned above are tight. We provide 
an example of $(X, Y)$, such that transmission of $X$ from Alice to Bob with error probability $\varepsilon$ requires $H(X|Y) + \Omega\left(\log_2\left(\frac{1}{\varepsilon}\right)\right)$ bits on average.
\end{abstract}

\section{Introduction}
Assume that Alice receives a value of a random variable $X$ and she wants to transmit that value to Bob. It is well-known (\cite{shannon2001mathematical}) that Alice can do it using one message over the binary alphabet of expected length less than $H(X) + 1$. Assume now that there are $n$ independent random variables $X_1, \ldots, X_n$ distributed as $X$, and Alice wants to transmit all 
$X_1, \ldots, X_n$ to Bob. Another classical result from \cite{shannon2001mathematical} states, that Alice can do it using one message of \textit{fixed} length, namely $\approx nH(X)$,
with a small probability of error.

One of the possible ways to generalize this problem is to provide Bob with a value of another random variable $Y$ which is jointly distributed with $X$. That is, to let Bob know some partial information about $X$ for free. This problem is the subject of the classical Slepian-Wolf Theorem~\cite{slepian1973noiseless} which asserts that if there are $n$ independent pairs $(X_1, Y_1), \ldots, (X_n, Y_n)$, each pair distributed exactly as $(X, Y)$, then Alice can transmit all 
$X_1, \ldots, X_n$ to Bob, who knows $Y_1, \ldots, Y_n$, using one message of fixed length, namely $\approx nH(X|Y)$, with a small probability of error.
However, it turns out that a one-shot analogue of this theorem is impossible, if only one-way communication is allowed.

The situation is quite different, if we allow Alice and Bob to \textit{interact}, that is, to send messages in both directions. In \cite{orlitsky1992average} Orlitsky studied this problem for the average-case communication when no error is allowed. He showed that if pair $(X, Y)$ is uniformly distributed on it's support, then Alice may transmit $X$ to Bob using at most
$$H(X|Y) + 3\log_2(H(X|Y) + 1) + 17$$
bits on average and 4 rounds. For the pairs $(X, Y)$ whose 
support is a Cartesian product Orlitsky showed that error-less transmission of $X$ from Alice to Bob requires $H(X)$ bits on average.

From a result of Braverman and Rao (\cite{braverman2011information}), 
it follows that for arbitrary $(X, Y)$ it is sufficient to communicate at most
$$H(X|Y) + 5\sqrt{H(X|Y)} + O\left(\log_2\left(\frac{1}{\varepsilon}\right)\right)$$
bits on average (here $\varepsilon$ stands for the error probability). 

We improve this result, showing that Alice may transmit $X$ to Bob with error probability at most $\varepsilon$
(for each pair of inputs) using at most 
$$H(X|Y) + 2\sqrt{H(X|Y)} + O\left(\log_2\left(\frac{1}{\varepsilon}\right)\right)$$
bits on average and $O(\sqrt{H(X|Y)})$ rounds. Our protocol is inspired by protocol from \cite{bauerinternal}. The idea of the protocol is essentially the same, we only apply some technical trick to reduce communication.

Actually, in \cite{braverman2011information} a more general result was established. It was shown there that every one-round protocol $\pi$ with information complexity $I$ can be compressed to the (many-round) protocol with expected length at most 
\begin{equation}
\label{prev_res1}
\approx I + 5\sqrt{I}.
\end{equation}
Using the result from \cite{braverman2014public}, we 
improve \ref{prev_res1}. Namely, we  show that  every one-round protocol $\pi$ with information complexity $I$ can be compressed to the (many-round) protocol with expected communication length at most 
$$\approx I + 2\sqrt{I}.$$


\color{black}

In \cite{brody2013towards}, it is established a one-shot interactive analogue of the Slepian-Wolf theorem for the bounded-round communication. They showed that Alice may transmit $X$ to Bob using at most $O(H(X|Y) + 1)$ bits and $O(1)$ 
rounds on average. More specifically, their protocol
transmits at most $4H(X|Y) +\log_2(1/\eps)+O(1)$ 
bits on average in 10 rounds on average. 
In this paper, we provide another proof of this result, 
which seems to be easier. 
More specifically, we show that it is sufficient to communicate at most 
$$3H(X|Y) + \log_2\left(\frac{1}{\varepsilon}\right)+O(1)$$
bits on average in at most $4$ rounds on average.

From the proof of our upper bound it follows that there exists a \textit{deterministic protocol} which transmits $X$ from Alice to Bob using the same number of bits on average (namely $H(X|Y) + 2\sqrt{H(X|Y)} + O\left(\log_2\left(\frac{1}{\varepsilon}\right)\right)$) and which guaranties that for at most $\varepsilon$-fraction of inputs (with respect to the distribution of $(X, Y)$) the transmission is incorrect. Are there random variables 
$X, Y$ for which the corresponding upper bound is tight? 
We make a step towards answering 
this question: we provide an example of random variables $X, Y$ 
such that every deterministic protocol which transmits $X$ from Alice to Bob with error probability $\eps$ 
must communicate at least $H(X|Y) + \Omega\left(\log_2\left(\frac{1}{\varepsilon}\right)\right)$ bits on average.

In the Appendix we provide an example of $(X, Y)$ 
for which it seems plausible that the upper bound
$H(X|Y) + O(\sqrt{H(X|Y)})$ is tight. 

\section{Definitions}
We will denote the set of the first $n$ naturals $\{1, 2, \ldots, n\}$ by $[n]$.
\subsection{Information Theory}
Let $X$, $Y$ be two joint distributed random variables, taking values in the finite sets, respectively, $\mathcal{X}$ and $\mathcal{Y}$. 
\begin{definition}
Shannon Entropy of $X$ is defined by the formula
$$H(X) = \sum\limits_{x\in\mathcal{X}}\Pr[X = x]\log_2\left(\frac{1}{\Pr[X = x]}\right).$$
\end{definition}
\begin{definition}
Conditional Shannon entropy of $X$ with respect to $Y$ is defined by the formula:
$$H(X|Y) = \sum\limits_{y\in\mathcal{Y}}H(X|Y = y)\Pr[Y = y],$$
where $X|Y = y$ denotes a distribution of $X$, conditioned on the event $\{Y = y\}$.
\end{definition}

If $X$ is uniformly distributed in $\mathcal{X}$ then obviously 
$H(X) = \log_2(|\mathcal{X}|)$. We will also use the fact that 
the formula for conditional entropy may be re-written as
$$H(X|Y) = \sum\limits_{(x, y)\in\mathcal{X}\times\mathcal{Y}}\Pr[X = x, Y = y]\log_2\left(\frac{1}{\Pr[X = x|Y = y]}\right).$$
Generalization of the Shannon entropy is Renyi entropy.
\begin{definition}
Renyi entropy of $X$ 
is defined by the formula
$$H_2(X) =  -\log_2\left(\sum\limits_{x\in\mathcal{X}}\Pr[X = x]^2\right).$$
\end{definition}
Concavity of $\log$ implies that 
$H(X) \ge H_2(X)$.
 
The mutual information of two random variables $X$ and $Y$, conditioned on another random variable $Z$, can be defined as:
$$I(X:Y|Z) = H(X|Z) - H(X|Y, Z).$$
\color{black}

For the further introduction in information theory see, for example \cite{yeung2008information}.

\subsection{Communication Protocols}
\color{black}
Assume that we are given jointly distributed random variables $X$ and $Y$, taking values in finite sets $\mathcal{X}$ and $\mathcal{Y}$.
Let $R, R_A, R_B$ be a random variables, taking values in finite sets $\mathcal{R}$, $\mathcal{R_A}$ and $\mathcal{R_B}$, such that  $(X, Y), R, R_A, R_B$ are mutually independent.

\begin{definition} A randomized communication protocol 
is a rooted binary tree, in which each non-leaf vertex is associated either with Alice or with Bob. For each non-leaf vertex $v$ associated with Alice there is a function 
$f_v:\mathcal{X}\times\mathcal{R}\times\mathcal{R_A}\to\{0, 1\}$ and for each non-leaf vertex $u$ associated with Bob there is a function $g_u:\mathcal{Y}\times\mathcal{R}\times\mathcal{R_B}\to\{0, 1\}$. For each non-leaf vertex one of an out-going edges is labeled by $0$ and other is labeled by $1$. Finally, for each leaf $l$ there is a function $\phi_l:\mathcal{Y}\times\mathcal{R}\times\mathcal{R_B}\to\mathcal{O}$, where $\mathcal{O}$ denotes the set of all possible Bob's outputs.
\end{definition}

 \koz{We could define the Alice's output in the same manner, but in all the protocols below we are interested only in Bob's output.}

A computation according to a protocol runs as follows. Alice is given $x \in \mathcal{X}$, Bob is given $y \in \mathcal{Y}$. Assume that the random variables $R$ takes a value $r$, $R_A$ takes a value $r_a$ and $R_B$ takes a value $r_b$. Alice and Bob start at the root of the tree. If they are in the non-leaf vertex $v$ associated with Alice, then Alice sends $f_v(x, r, r_a)$ to Bob and they go by the edge labeled by $f_v(x, r, r_a)$. If they are in a non-leaf vertex associated with Bob then Bob sends $g_v(y, r, r_b)$ to Alice and they go by the edge labeled by $g_v(y, r, r_b)$. When they reach a leaf $l$ Bob outputs the result $\phi_l(y, r, r_b)$.

A protocol is called \textit{public-coin} if $f_v, g_u$ and $\phi_l$ do not depend on the values of $R_A, R_B$.

A protocol is called \textit{deterministic} if $f_v, g_u$ and $\phi_l$ do not depend on the values of $R, R_A, R_B$.

We distinguish between average-case communication complexity and the worst-case communication complexity.

\begin{definition}
The (worst-case) communication complexity of a protocol $\pi$, denoted by $CC(\pi)$, is defined as the depth of the corresponding binary tree.

We say that protocol $\pi$ communicates $d$ bits on average (or expected length of the protocol is equal to $d$), if the expected depth of the leaf that Alice and Bob reach during the execution of the protocol $\pi$ is equal to $d$, where the expectation is taken over $X$, $Y$, $R$, $R_A$, $R_B$.
\end{definition}

If the Alice's goal is to transmit $X$ to Bob, then in the end of the communication Bob should output some element of $\mathcal{X}$ (that is, $\mathcal{O} = \mathcal{X}$). We say that protocol transmits $X$ from Alice to Bob with error probability $\eps$ if
$$\Pr[X = \phi_L(Y, R, R_B)] \ge 1 - \eps,$$
where $L$ denotes the leaf that Alice and Bob reach in the protocol tree.

For the worst-case communication it is sufficient to consider only deterministic protocols. Indeed, assume that we are given a randomized protocol solving our problem with error 
probability $\eps$. Fix the value of $R$ for which error probability is minimal. In this way we obtain a protocol with the same worst-case communication complexity and error probability.

For the further introduction in Communication Complexity see \cite{kushilevitz2006communication}

\section{Near-optimal one-shot Slepian-Wolf theorem}

Consider the following auxiliary problem. Let $A$ be a finite set. 
Assume that Alice receives an arbitrary $a\in A$ and Bob receives and arbitrary probability distribution $\mu$ on $A$.
Alice wants to communicate $a$ to Bob in about $\log (1/\mu(a))$ bits
with small probability of error.

\begin{lemma}\label{u1} Let $\eps$ be a positive real and 
$h$ a positive integer. There exists a
public coin randomized communication protocol such that
for all $a$ in the support of $\mu$ the following hold:
\begin{itemize}

\item in the end of the communication 
Bob outputs $b\in A$ which is equal to $a$ with probability at least $1 - \eps$;

\item the protocol communicates at most 
$$\log_2\left(\frac{1}{\mu(a)}\right) + \frac{\log_2\left(\frac{1}{\mu(a)}\right)}{h} + h +\log_2\left(\frac{1}{\eps}\right) + O(1)$$
bits, regardless of the randomness.

\end{itemize}
\end{lemma}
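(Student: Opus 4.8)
The plan is to run a ``hashing with feedback'' protocol, adapted to the fact that Alice knows $a$ but not $\mu$, so she cannot know $\log_2(1/\mu(a))$ in advance. From the public coins both players extract, for every $a'\in A$, an infinite string $\sigma_{a'}\in\{0,1\}^\infty$, all strings and all coordinates mutually independent and uniform; for a finite binary word $p$ let $S(p)=\{a'\in A : \sigma_{a'}\text{ has prefix }p\}$, and note that $a\in S(p)$ whenever $p$ is a prefix of $\sigma_a$. Put $t_0=\lceil\log_2(1/\eps)\rceil$. First Alice sends the length-$t_0$ prefix of $\sigma_a$. Then for $k=0,1,2,\dots$ Bob looks at the length-$(t_0+kh)$ prefix $p_k$ of $\sigma_a$ received so far, picks a most likely consistent element $\hat a_k\in\arg\max_{a'\in S(p_k)}\mu(a')$, and, as soon as $t_0+kh\ge\log_2(1/\mu(\hat a_k))+\log_2(1/\eps)$, sends a ``stop'' bit and outputs $\hat a_k$; otherwise he sends a ``continue'' bit and Alice appends the next $h$ bits of $\sigma_a$. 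Thus each round costs $h$ bits from Alice and one bit from Bob.

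For correctness, Bob errs only if the element $\hat a_k$ he finally outputs is some $a'\neq a$. In that event $\sigma_{a'}$ agrees with $\sigma_a$ on the first $t_0+kh\ge\log_2(1/\mu(a'))+\log_2(1/\eps)$ coordinates, hence in particular on the first $m(a'):=\lceil\log_2(1/\mu(a'))+\log_2(1/\eps)\rceil$ of them; since $\sigma_{a'}$ is independent of $\sigma_a$ and $m(a')$ is a fixed number, this has probability $2^{-m(a')}\le\eps\,\mu(a')$. A union bound over $a'\neq a$ then gives total error at most $\eps\sum_{a'}\mu(a')\le\eps$. The point that keeps this clean is that the event ``Bob outputs $a'$'' already forces prefix agreement up to length $m(a')$, whatever the (random) length of the actual interaction turns out to be, so the random stopping time needs no special treatment.

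For the length bound, since $a\in S(p_k)$ for every $k$ we have $\mu(\hat a_k)\ge\mu(a)$, hence $\log_2(1/\mu(\hat a_k))\le\log_2(1/\mu(a))$, and therefore Bob has certainly stopped by round $k^{*}=\lceil\log_2(1/\mu(a))/h\rceil$, and this holds for every value of the public coins. By that round Alice has sent $t_0+k^{*}h$ bits and Bob $k^{*}+1$ bits, so the total is at most $\lceil\log_2(1/\eps)\rceil+\lceil\log_2(1/\mu(a))/h\rceil(h+1)+1$, which expands to $\log_2(1/\mu(a))+\frac{\log_2(1/\mu(a))}{h}+h+\log_2(1/\eps)+O(1)$, as required; finiteness of the protocol tree is automatic since $A$ is finite, so every execution halts within a bounded number of rounds.

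The only real care needed --- more bookkeeping than difficulty --- is in the choice of round lengths so that the excess over $\log_2(1/\mu(a))$ comes out as exactly $\frac{\log_2(1/\mu(a))}{h}+h+\log_2(1/\eps)+O(1)$ with coefficient $1$ in front of $\log_2(1/\eps)$: this is precisely why one spends a separate initial block of about $\log_2(1/\eps)$ bits instead of folding $\log_2(1/\eps)$ into each round's threshold, which would leave a spurious $\frac{\log_2(1/\eps)}{h}$ term. One should also make sure Bob's stopping test is phrased in terms of $\mu(\hat a_k)$, which he can compute, rather than $\mu(a)$, which he cannot, and check that replacing the latter by the former can only make him stop sooner, so that both the error bound and the length bound survive.
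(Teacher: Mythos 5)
Your proof is correct and takes essentially the same approach as the paper's: Alice streams public random hash bits (an initial block of about $\log_2(1/\eps)$ bits, then blocks of $h$), and Bob stops as soon as some consistent element's probability is large enough relative to the number of hash bits received, with the error controlled by a union bound over all $a'\neq a$. The only differences are cosmetic --- you phrase Bob's rule via the most-likely consistent element and a fixed prefix length $m(a')$, whereas the paper phrases it via dyadic buckets $S_i$ and counts $|S_i|2^{-k-i}$; both yield the same protocol and the same error and length bounds.
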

\begin{proof}
Alice and Bob interpret each portion of $|A|$ consecutive 
bits from the public randomness source as a table of 
a random function $h:A\rightarrow \{0,1\}$. That is, we will think 
that they have
access to a large enough family of mutually independent random functions of the type $A\rightarrow \{0,1\}$. Those functions will be called \emph{hash functions} and their values \emph{hash values} 
below.

The first set $k = \left\lceil\log_2\left(\frac{1}{\eps}\right)\right\rceil + 1$. Then Bob sets:
$$S_i = \left\{x\in A\,|\,\mu(x) \in (2^{-i - 1}, 2^{-i}]\right\}.$$

Then Alice and Bob work in stages numbered $0,1,\dots$.

On \emph{Stage 0:} 
\begin{enumerate}
\item Alice sends $k$ hash values of $a$ to Bob.
\item Bob computes set $S_0^\prime$, which consists of all elements 
from $S_0$ that have the same hash values as sent by Alice
(actually $S_0$ has at most one element).
\item If $S_0^\prime \neq \varnothing$, then Bob sends $1$ to Alice, outputs any element of $S_0^\prime$ and they terminate. Otherwise Bob sends $0$ to Alice and they proceed of Stage 1.
\end{enumerate}

On \emph{Stage $t$}:
\begin{enumerate}
\item Alice sends $h$ new hash values of $a$ to Bob so that 
the total number of hash values of $a$ available to Bob be $k + ht$. 
\item For each $i \in \{h(t - 1) + 1, \ldots, ht\}$ Bob computes set $S_i^\prime$, which consists of all elements from $S_i$, which agree with all Alice's hash values.
\item If there exists $i \in \{h(t - 1) + 1, \ldots, ht\}$ such that $S_i^\prime \neq \varnothing$, then Bob sends $1$ to Alice, outputs any element of $S_i^\prime$ and they terminate. 
Otherwise Bob sends $0$ to Alice and they proceed to Stage $t + 1$.
\end{enumerate}

Let us at first show that the protocol terminates for all $a$ in the support
of $\mu$. Assume that Alice has $a$ and Bob has $\mu$. 
Let $i = \left\lfloor\log_2\left(\frac{1}{\mu(a)}\right)\right\rfloor$ so that $a\in S_i$. 
The protocol terminates on Stage $t$ where 
$$h(t - 1) + 1 \le i \le ht$$
or earlier. Indeed all hash values of $a$ available to Bob on
Stage $t$ coincide 
with hash values of some element of $S_i$ (for instance, with those of $a$).

Thus Alice sends at most $k + ht$ bits to Bob and Bob sends at most $1 + t$ bits to Alice. 
Therefore total communication is bounded by 
$$
\begin{aligned}
k + ht + 1 + t &= k + h(t - 1) + h + 2 + (t - 1)\\
&\le k + i - 1 + h + 2 + \frac{i - 1}{h}\\
&\le k + \log_2\left(\frac{1}{\mu(a)}\right) + \frac{\log_2\left(\frac{1}{\mu(a)}\right)}{h} + h + O(1).
\end{aligned}
$$
Since $k = \left\lceil\log_2\left(\frac{1}{\eps}\right)\right\rceil + 1$, the required bound follows.

Now we bound the error probability. An error may occurs, if for some $t$ a set 
$S_i$ considered on Stage $t$ has an element $b\neq a$ which agrees with hash values sent from Alice.
At that time Bob has already $k+ht\ge k+i$ hash values. The probability that 
$k+i$ hash values of $b$ coincide with those of $a$ is $2^{-k-i}$. Hence
by union bound error probability does not exceed
$$
\begin{aligned}
\sum\limits_{i = 0}^{\infty} |S_i|2^{-k - i} &= 
2^{-k + 1}\sum\limits_{i = 0}^\infty |S_i| 2^{-i - 1}
< 2^{-k + 1}\sum\limits_{i = 0}^\infty\sum\limits_{x\in S_i}\mu(x)\\
&= 2^{-k + 1}\sum\limits_{x\in A}\mu(x)= 2^{-k + 1} = 2^{- \left\lceil\log_2\left(\frac{1}{\eps}\right)\right\rceil} \le \eps.
\end{aligned}
$$
\end{proof}

\begin{thm}\label{thm:main}
Let $X$, $Y$ be jointly distributed random variables that take values in the finite sets 
$\mathcal{X}$ and $\mathcal{Y}$. 
Then for every positive $\eps$ there exists a 
public-coin protocol with the following properties.
\begin{itemize}
\item For every pair $(x, y)$ from the support of $(X,Y)$ with probability at least 
$1 - \eps$ Bob outputs $x$;
\item The expected length of communication is at most
$$H(X|Y) + 2\sqrt{H(X|Y)} + \log_2\left(\frac{1}{\eps}\right) + O(1).$$
\end{itemize}
\end{thm}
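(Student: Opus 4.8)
The plan is to instantiate Lemma~\ref{u1} with a well-chosen distribution and a well-chosen integer parameter $h$, and then take expectations. Since the protocol may depend on the joint distribution of $(X,Y)$, both players first compute $H := H(X|Y)$ and fix, once and for all, $h := \max\{1, \lceil\sqrt{H}\,\rceil\}$. On input $y$, Bob forms the conditional distribution $\mu_y(\cdot) := \Pr[X = \cdot \mid Y = y]$ on $\mathcal X$ (this is all the information about $(X,Y)$ he needs, and note that Alice does not use $\mu_y$ at all — in the protocol of Lemma~\ref{u1} only Bob looks at $\mu$). The players then run the protocol of Lemma~\ref{u1} with parameters $\eps$ and $h$, on Alice's input $a = x$ and Bob's input $\mu = \mu_y$, and Bob outputs whatever that protocol outputs.

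Correctness is immediate: for every $(x,y)$ in the support of $(X,Y)$ the value $x$ lies in the support of $\mu_y$, so by the first bullet of Lemma~\ref{u1} Bob recovers $x$ with probability at least $1-\eps$. Thus the constructed protocol transmits $X$ with error probability at most $\eps$ on every pair of inputs, giving the first bullet of the theorem.

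For the communication bound, fix $(x,y)$ in the support. By the second bullet of Lemma~\ref{u1}, the number of bits exchanged on this pair is, regardless of the public randomness, at most
\[
\log_2\!\left(\frac{1}{\mu_y(x)}\right) + \frac{1}{h}\log_2\!\left(\frac{1}{\mu_y(x)}\right) + h + \log_2\!\left(\frac{1}{\eps}\right) + O(1).
\]
Taking the expectation over $(X,Y)$ and using the identity $\sum_{(x,y)}\Pr[X=x,Y=y]\log_2\bigl(1/\Pr[X=x\mid Y=y]\bigr) = H(X|Y)$, the first term contributes $H$, the second contributes $H/h$, and the rest are constants. It remains to observe that our choice of $h$ makes $H/h + h \le 2\sqrt{H} + O(1)$: if $H \ge 1$ then $h \in [\sqrt H, \sqrt H + 1]$, so $H/h \le \sqrt H$ and hence $H/h + h \le 2\sqrt H + 1$; if $H < 1$ then $h = 1$ and $H/h + h = H + 1 \le \sqrt H + 1$. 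Either way the expected length is at most $H(X|Y) + 2\sqrt{H(X|Y)} + \log_2(1/\eps) + O(1)$, as claimed (and the expected number of rounds is $O(H/h) = O(\sqrt{H})$, matching the bound stated in the introduction).

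I do not anticipate a genuine obstacle: essentially all the work is already inside Lemma~\ref{u1}, and the theorem is just its instantiation with the AM--GM-optimal integer $h \approx \sqrt{H(X|Y)}$. The only points needing a little care are the degenerate regimes where $H(X|Y)$ is small or not a perfect square — handled above by rounding and by the clamp $h \ge 1$ — and the (routine) remark that the per-input guarantees of Lemma~\ref{u1} directly yield the per-input guarantees demanded by the theorem, since conditioning on $Y = y$ turns the transmission problem into exactly the auxiliary problem solved there.
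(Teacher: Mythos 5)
Your proof is correct and is essentially the same as the paper's: instantiate Lemma~\ref{u1} with $\mu$ equal to the conditional distribution of $X$ given $Y=y$ and with $h\approx\lceil\sqrt{H(X|Y)}\rceil$, then take expectations over $(X,Y)$ using the standard identity for $H(X|Y)$. You are slightly more careful than the paper in clamping $h\ge 1$ (which handles the degenerate case $H(X|Y)=0$, where $\lceil\sqrt{H}\rceil=0$ would violate the positivity requirement on $h$ in Lemma~\ref{u1}) and in spelling out the AM--GM step $H/h+h\le 2\sqrt H+O(1)$, but these are minor touches on the identical argument.
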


\begin{proof}
On input $x,y$, Alice and Bob run protocol of Lemma \ref{u1} 
with $A=\mathcal{X}$, $h = \left\lceil \sqrt{H(X|Y)}\right\rceil$, 
$a = x$ and $\mu$ equal to the distribution of $X$, conditioned on the event $Y = y$. 
Notice that Alice knows $a$ and Bob knows $\mu$.

Let us show that both requirements are fulfilled for this protocol.
The first requirement immediately follows from the first property
of the protocol of Lemma \ref{u1}.

From the second property
of the protocol of Lemma \ref{u1} it follows that for 
input pair $x,y$ out protocol communicates at most:
$$
\log_2\left(\frac{1}{\Pr[X = x|Y = y]}\right) + \frac{\log_2\left(\frac{1}{\Pr[X = x|Y = y]}\right)}{\left\lceil \sqrt{H(X|Y)}\right\rceil} + \left\lceil \sqrt{H(X|Y)}\right\rceil +\log_2\left(\frac{1}{\eps}\right) + O(1)
$$
bits. Recalling that 
$$
H(X|Y)=\sum\limits_{(x, y)\in\mathcal{X}\times\mathcal{Y}}\Pr[X = x, Y = y]\log_2\left(\frac{1}{\Pr[X = x|Y = y]}\right) 
$$
we see on average the communication is as short as required. 
\end{proof}

\textit{Remark.} 
One may wonder whether there exists a private-coin communication protocol
with the same properties as the protocol of Theorem \ref{thm:main}. 
Newman's theorem (\cite{newman1991private}) states that every 
public-coin protocol can be transformed into 
a private-coin protocol at the expense of increasing 
the error probability by $\delta$ and the worst case communication 
by $O(\log\log|\mathcal{X}\times\mathcal{Y}|+\log1/\delta)$ (for any positive $\delta$).
Lemma \ref{u1} provides an upper bound for 
the error probability and communication of our protocol for each pair of inputs. 
Repeating the arguments from the proof of Newman's theorem, 
we are able to transform the public-coin protocol of Lemma~\ref{u1}
into a private-coin one with the same trade off between the increase of error probability
and the increase of communication length.
It follows that for our problem there exists a 
private-coin communication protocol which errs with probability at most $\eps$ and 
communicates on average as many bits as the public-coin protocol from Theorem \ref{thm:main} 
plus extra $O(\log\log|\mathcal{X}\times\mathcal{Y}|)$ bits.

\section{One-shot Slepian-Wolf theorem with a constant number of rounds on average}
In this section, 
we modify the construction from the previous section 
to reduce the average number of rounds to a constant.

\begin{thm}
Let $X$, $Y$ be jointly distributed random variables that take values in the finite sets 
$\mathcal{X}$ and $\mathcal{Y}$. 
Then for every positive $\eps$ there exists a 
public-coin protocol with the following properties:
\begin{itemize}
\item For every pair $(x, y)$ from the support of $(X,Y)$ with probability at least 
$1 - \eps$ Bob outputs $x$;
\item The expected length of the protocol does not exceed
$$3H(X|Y) + \log_2\left(\frac{1}{\eps}\right) + O(1).$$
\item The expected number of rounds in protocol is at most 4.
\end{itemize}
(Compared to Theorem~\ref{thm:main}, the number of rounds has decreased and the communication
length has increased.)
\end{thm}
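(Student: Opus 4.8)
The plan is to re-run the protocol of Lemma~\ref{u1} with a larger block size. In Theorem~\ref{thm:main} the block size $h=\lceil\sqrt{H(X|Y)}\rceil$ is exactly what makes the number of stages (i.e. rounds) grow like $\sqrt{H(X|Y)}$; the only change I would make is to take $h=\lceil H(X|Y)\rceil+1$, with $A=\mathcal X$, $a=x$, and $\mu$ the distribution of $X$ conditioned on $Y=y$. (The ``$+1$'' just ensures $h\ge 1$, so that Lemma~\ref{u1} applies even when $H(X|Y)=0$.) As in Theorem~\ref{thm:main}, Alice knows $a$, Bob knows $\mu$, and the per-input error bound $\eps$ carries over verbatim from Lemma~\ref{u1}.

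For the communication length, substitute this $h$ into the bound of Lemma~\ref{u1}: on input $(x,y)$, writing $p=\Pr[X=x\mid Y=y]$, the protocol communicates at most $\log_2(1/p)+\log_2(1/p)/h+h+\log_2(1/\eps)+O(1)$ bits regardless of the public randomness. Averaging over $(X,Y)$ and using $\sum_{x,y}\Pr[X=x,Y=y]\log_2(1/\Pr[X=x\mid Y=y])=H(X|Y)$ together with $1/h\le 1$ and $h\le H(X|Y)+2$, the expected communication is at most $H(X|Y)+H(X|Y)+(H(X|Y)+2)+\log_2(1/\eps)+O(1)\le 3H(X|Y)+\log_2(1/\eps)+O(1)$, which is the claimed bound. (One can be less wasteful: $\log_2(1/p)/h$ averages to $H(X|Y)/h<1$, so this argument actually delivers $2H(X|Y)+\log_2(1/\eps)+O(1)$; only the weaker bound is needed here.)

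For the round count one must inspect the proof of Lemma~\ref{u1}, since its statement does not record the number of rounds. That proof shows that on an input with $i=\lfloor\log_2(1/p)\rfloor$ the protocol halts by the stage $\lceil i/h\rceil$ (and at stage $0$ when $i=0$), so, counting stage $0$, it uses at most $\lceil i/h\rceil+1$ stages/rounds. Hence the expected number of rounds is at most $E\bigl[\lceil i/h\rceil\bigr]+1\le E[i]/h+2$, and since $E[i]\le H(X|Y)<h$ this is $<3\le 4$. The whole content of the argument is in replacing $h=\Theta(\sqrt{H(X|Y)})$ by $h=\Theta(H(X|Y))$; the only point requiring care is that the communication averaging uses $h=O(H(X|Y))$ while the round averaging uses $h=\Omega(H(X|Y))$, so a block size of order exactly $H(X|Y)$ is what is needed, and the constants $3$ and $4$ in the statement are intentionally loose, as the parenthetical remark suggests.
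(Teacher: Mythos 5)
Your communication analysis is fine (in fact, with $h=\lceil H(X|Y)\rceil+1$ it gives the sharper bound $2H(X|Y)+\log_2(1/\eps)+O(1)$), but the round count is where the argument breaks. In the protocol of Lemma~\ref{u1} each stage consists of \emph{two} messages, one from Alice (a batch of hash bits) and one from Bob (the accept/reject bit), and the paper explicitly counts rounds this way: for a protocol that stops at stage $t$ the number of rounds is $2t+2$, not $t+1$. Your computation shows the expected \emph{stage index} satisfies $E[T]\le E[i]/h+1<2$, i.e.\ the expected number of stages is $<3$; translated into rounds this gives $E[\text{rounds}]=2+2E[T]<6$, which does not establish the stated bound of $4$. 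This is not a constant you can push below $4$ by taking $h$ larger: with a fixed block size $h$, any input with $i>h$ forces at least stage $2$ (six rounds), so $E[\text{rounds}]\ge 2+2\Pr[i\ge1]+2\Pr[i>h]$, and if you try to kill $\Pr[i>h]$ by enlarging $h$, the additive $+h$ term in the communication bound of Lemma~\ref{u1} blows past $3H(X|Y)$.

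The paper's proof avoids this tension by abandoning the fixed block size: on Stage~$t$ Alice sends $l\cdot 2^{t}$ \emph{new} hash bits (with $l=\lceil H(X|Y)\rceil$), so the stopping stage satisfies $(2^{t}-1)l<j$, hence $t\le\log_2(1+j/l)$. The round count $2t+2\le 2+2\log_2(1+j/l)$ is then \emph{concave} in $j$, and Jensen's inequality with $E[j]\le H(X|Y)\le l$ gives exactly $2+2\log_2 2=4$. Meanwhile the total number of hash bits sent through stage $t$ is $k+l(2^{t+1}-1)<k+l+2j$, which averages to $3H(X|Y)+\log_2(1/\eps)+O(1)$. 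So the geometric doubling is not a cosmetic choice: it is precisely what converts the linear-in-$j$ stage count into a logarithmic one so that Jensen can squeeze the expected rounds to $4$, at the cost of a factor roughly $2$ in communication compared to your fixed-$h$ variant. To repair your argument you would need to replace the constant block size by the doubling schedule and redo both the communication and the round averaging along these lines.
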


\begin{proof}
We will use the following notation:
$$
l = \lceil H(X|Y)\rceil, \qquad
k = \left\lceil\log_2\left(\frac{1}{\eps}\right)\right\rceil+1,
$$ 
$$\mu(x, y) = \Pr[X = x, Y = y],\qquad
\mu(x|y) = \Pr[X = x|Y = y].$$

Alice and Bob apply the following modification 
of the protocol of Lemma~\ref{u1}.
Recall that that protocol works in stages. On Stage 0 Alice sends to Bob 
$k$ random hash bits and on each subsequent stage 
Alice sends to Bob extra $h$ random hash bits. On each stage Bob 
looks for an 
element in all sets 
$$
S_i = \{x^\prime\left.\right| \mu(x^\prime|y) \in (2^{-i - 1}, 2^{-i}]\}.
$$
such that $i$ is at least $k$ less than the total number of hash 
bits he has so far. 
This guarantees that the error probability is at least $\eps$ for all input pairs.

Now Alice sends $k+l$ hash bits on Stage 0 and 
$l2^t$ new hash bits on Stage $t>0$. This is the main difference between the new protocol and
the protocol of Theorem~\ref{thm:main}. In order to keep the error
probability at most $\eps$,
on Stage $t$ Bob looks for an element
in $S_i$ with the same hash values as sent by Alice
for $i\le l + 2l + \ldots + 2^tl$. 
If there is such an element, then Bob 
outputs any such element (and sends 1 to Alice). Otherwise he sends 0 and they proceed
to the next stage.

As earlier, 
by union bound the error probability does not exceed
$$
\begin{aligned}
\sum\limits_{i = 0}^{\infty} |S_i|2^{-k - i} &= 
2^{-k+1}\sum\limits_{i = 0}^\infty |S_i| 2^{-i - 1}
\le 2^{-k+1}\sum\limits_{i = 0}^\infty\sum\limits_{x^\prime\in S_i}\mu(x^\prime|y)\\
&= 2^{-k+1}\sum\limits_{x^\prime\in\mathcal{X}}\mu(x^\prime|y)\le 2^{-k+1} = 2^{- \left\lceil\log_2\left(\frac{1}{\eps}\right)\right\rceil} \le \eps.
\end{aligned}
$$


Now we will estimate the communication length on each input pair $(x,y)$
of positive probability. Bob sends one bit in each round. As we will see
the average number of rounds is at most 4, thus we may forget about the communication 
from Bob and concentrate on communication from Alice.

Set $j = j(x, y) = \left\lfloor\log_2\left(\frac{1}{\mu(x|y)}\right)\right\rfloor.$
Notice that $x\in S_j$. 
Consider 
$t$ such that
\begin{equation}
\label{main}
l + 2l + \ldots + 2^{t - 1}l <j \le l + 2l + \ldots + 2^{t}l .
\end{equation}
By the construction of the protocol
the communication length for input $x,y$ is at most
$$
\begin{aligned}
&k + l + 2l + \ldots + 2^{t}l\\
&=
k + l + 2(l + 2l + \ldots + 2^{t - 1}l)\\
&<k + l + 2j.
\end{aligned}
$$
Hence the expected length of communication
from Alice to Bob is at most
$$
\begin{aligned}
&\sum\limits_{(x, y)\in \mathcal{X}\times\mathcal{Y}}\mu(x, y)(k + l + 2j(x, y))\\
&\le k + l + 2\sum\limits_{(x, y)\in \mathcal{X}\times\mathcal{Y}}\mu(x, y)\log_2\left(\frac{1}{\mu(x|y)}\right) \\
&= k + l + 2H(X|Y) = 3H(X|Y) + k + O(1).
\end{aligned}
$$

Let us bound the expected number of rounds in our protocol. 
Let $R(x, y)$ stand for the number for inputs $X = x$, $Y = y$. 
Then $R(x, y)$ is at most $2t + 2$, where $t$ is defined by \eqref{main}. 
By \eqref{main} we have 
$$(2^{t} - 1)l < j \le \log_2\left(\frac{1}{\mu(x|y)}\right)$$
and hence:
$$t \le \log_2\left(1 + \frac{\log_2\left(\frac{1}{\mu(x|y)}\right)}{l}\right).$$
Thus:
$$
R(x, y) \le 2 + 2\log_2\left(1 + \frac{\log_2\left(\frac{1}{\mu(x|y)}\right)}{l}\right).
$$
By concavity of the logarithmic function 
the average number of rounds does not exceed:
$$
\begin{aligned}
\sum\limits_{(x, y)\in \mathcal{X}\times\mathcal{Y}}\mu(x, y)R(x, y)&\le \sum\limits_{(x, y)\in \mathcal{X}\times\mathcal{Y}}\mu(x, y)\left(2 + 2\log_2\left(1 + \frac{\log_2\left(\frac{1}{\mu(x|y)}\right)}{l}\right)\right)\\
&\le 2 + 2\log_2\left(\sum\limits_{(x, y)\in \mathcal{X}\times\mathcal{Y}}\mu(x, y)\left(1 + \frac{\log_2\left(\frac{1}{\mu(x|y)}\right)}{l}\right)\right)\\
&\le 2 + 2\log_2(2) = 4.
\end{aligned}
$$
\end{proof}

 \section{One-round Compression}

Information complexity of the protocol $\pi$ with inputs $(X, Y)$  is defined as 
\begin{align*}
IC_\mu(\pi)  &= I(X : \Pi|Y, R) + I(Y : \Pi|X, R)\\
&= I(X : \Pi|Y, R,R_B) + I(Y : \Pi|X, R,R_A)\\
&= I(X : \Pi,R,R_B|Y) + I(Y : \Pi,R,R_A|X),
\end{align*}
where $R,R_A,R_B$ denote (shared, Alice's and Bob's)
randomness,
$\mu$ stands for the distribution of $(X, Y)$ and $\Pi$ stands for the concatenation of all bits sent in $\pi$ ($\Pi$ is called a \emph{transcript}). The first term is equal to the information which Bob learns about Alice's input and the second term is equal to the information which Alice learns about Bob's input. Information complexity is an important concept in the Communication Complexity. For example, information complexity plays the crucial role in the Direct-Sum problem (\cite{weinstein2015information}). 

We will
consider the special case when $\pi$ is \emph{one-round}. In this case Alice sends one message $\Pi$ to Bob, then Bob outputs
the result (based on his input, his randomness, and Alice's message) and
the protocol terminates. 
Since Alice learns nothing, information complexity can be re-written as
$$I = IC_\mu(\pi) = I(X:\Pi|Y, R).$$

Our 
goal is to simulate a given one-round
protocol $\pi$ 
with another protocol $\tau$ which has the same input space $(X, Y)$ and whose expected communication complexity is close to $I$. 
The new protocol $\tau$ may be many-round. The quality of simulation will be
measured by the  statistical distance. Statistical distance
 between random variables $A$ and $B$, 
both taking values in the set $V$, equals
$$\delta(A, B) = \max\limits_{U\subset V} \left|\Pr[A \in U] - \Pr[B \in U]\right|.$$
One of the main results of \cite{braverman2011information} 
is the following theorem.
\begin{thm}
\label{one_round_compression}
For every one-round protocol $\pi$ and for every 
probability distribution $\mu$
there is a public-coin protocol $\tau$ with expected length 
(with respect to $\mu$ and the randomness of $\tau$) at most 
$I +  5\sqrt{I} + O\left(\log_2\frac{1}{\varepsilon}\right)$ such that 
for each pair of inputs $(x, y)$ after termination
of $\tau$ Bob outputs a random variable 
$\Pi^\prime$ with $\delta\left(\left(\Pi|X = x, Y = y\right), \left(\Pi^\prime|X = x, Y = y\right)\right) \le \varepsilon$.
\end{thm}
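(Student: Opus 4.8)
The plan is to follow the one-round-to-sampling reduction of \cite{braverman2011information}, feed it the sharper one-shot sampling primitive of \cite{braverman2014public}, and then optimize a parameter exactly as in the proof of Theorem~\ref{thm:main}. First I would unwind the definitions. Fix the public randomness $R=r$ of $\pi$; since $\pi$ is one-round, the transcript $\Pi$ is a function of $X$, $r$ and Alice's private coins, so write $P_x$ for the law of $\Pi$ given $X=x,R=r$ and let $Q_y=\sum_{x}\Pr[X=x\mid Y=y,R=r]\,P_x$ be Bob's posterior on $\Pi$ given $Y=y,R=r$ --- a distribution Bob can compute. Because $\Pi$ is independent of $Y$ given $(X,R)$, the standard chain rule gives $I=I(X:\Pi\mid Y,R)=\mathbb E\big[D(P_x\,\|\,Q_y)\big]$, the expectation over $(x,y)\sim\mu$ and $r\sim R$. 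So it suffices to build a public-coin protocol in which, for every input pair, Bob outputs a string $\Pi'$ whose law is within statistical distance $\varepsilon$ of $P_x$ and whose expected bit cost is about $\mathbb E[D(P_x\|Q_y)]=I$.

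Second, I would invoke the sampling primitive, which is the correlated-sampling analogue of Lemma~\ref{u1}: if Alice holds a distribution $P$ and Bob a distribution $Q$ on a common finite set, then for every positive integer $h$ and positive $\varepsilon$ there is a public-coin protocol after which Bob outputs a string $m$ that is, up to statistical distance $\varepsilon$, distributed as $P$, that is equal to the string Alice obtains, and whose communication is, for every outcome, at most $\log_2\!\big(P(m)/Q(m)\big)\big(1+\tfrac1h\big)+h+\log_2(1/\varepsilon)+O(1)$ bits. Like the protocol of Lemma~\ref{u1}, this is a rejection-sampling scheme run on a public sequence of candidate strings, with hash bits revealed on a doubling schedule governed by $h$; its point is that the number of hash bits needed to single out Alice's accepted candidate is controlled by the likelihood ratio $P(m)/Q(m)$, not by $\log(1/Q(m))$.

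Third, I would run this primitive with $P=P_x$, $Q=Q_y$ and the fixed global choice $h=\lceil\sqrt I\rceil$ (legitimate because $\tau$ may depend on $\pi$ and $\mu$, hence on $I$), and take expectations. Conditioned on $x$ the output $m$ is $P_x$-distributed, so $\mathbb E\big[\log_2(P_x(m)/Q_y(m))\big]=D(P_x\|Q_y)$ and averaging gives main term $I$; the overhead $\tfrac1h\log_2(P_x(m)/Q_y(m))+h$ averages to at most $\tfrac Ih+h\le 2\sqrt I+O(1)$, exactly the computation in the proof of Theorem~\ref{thm:main}. For the error, note that here $P_x$ is also the law of $\Pi$ given $X=x,Y=y$, so ``within $\varepsilon$ of $P_x$'' is precisely the required closeness $\delta\big((\Pi\mid X=x,Y=y),(\Pi'\mid X=x,Y=y)\big)\le\varepsilon$ (rescaling $\varepsilon$ by a constant to absorb the small loss from truncating the candidate sequence and from hash collisions). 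Altogether $\tau$ has expected length $I+2\sqrt I+O(\log_2(1/\varepsilon))$, which is in particular at most $I+5\sqrt I+O(\log_2(1/\varepsilon))$, so the statement follows (and with the improved constant claimed as our contribution over \cite{braverman2011information}).

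The step I expect to be the real obstacle is the sampling primitive itself; if one does not cite it as a black box, the difficulty is that neither player knows the other's distribution, so the candidates must come from shared randomness --- say via rejection sampling against a fixed reference measure on the string set --- Bob's filter of ``plausible'' candidates must be definable from $Q$ alone, and one must prove that disambiguating Alice's accepted index takes only about $\log(P(m)/Q(m))$ hash bits; a filter that instead led to $\log(1/Q(m))$ would reintroduce an $H(\Pi\mid X,R)$-sized term and destroy the whole bound. Everything else is routine bookkeeping: keeping the fresh public randomness independent of $(X,Y,R,R_A,R_B)$; checking the chain-rule identity for $I$ in the one-round case; making the truncation of the (a priori infinite) candidate sequence quantitative in $\varepsilon$; and confirming that Bob's one-bit-per-round replies are already absorbed into the $\tfrac1h\log_2(P(m)/Q(m))$ term, as they are in Lemma~\ref{u1}, rather than adding another $\Theta(\sqrt I)$.
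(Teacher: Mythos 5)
The theorem you are asked to prove is cited from Braverman and Rao \cite{braverman2011information}; the paper does not reprove it but instead improves it, by a different route, in Theorem~\ref{our_one_round_compression}: first remove private coins using the theorem of \cite{braverman2014public} (at the cost of $\log_2(I+O(1))$ in information complexity), after which $\Pi$ is a function of $X$ and the shared randomness alone, so $I = H(\Pi|Y,R)$ and the problem reduces to pure transmission, which Theorem~\ref{thm:main} handles directly. Your proposal goes the other way: you keep private coins, identify $I = \mathsf{E}[D(P_x\|Q_y)]$ (correct), and invoke a correlated-sampling primitive with communication about $\log_2(P_x(m)/Q_y(m))(1+1/h) + h + O(\log_2(1/\varepsilon))$ per outcome, then optimize $h \approx \sqrt{I}$. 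That is the Braverman--Rao architecture, and you are right to flag the primitive as the real obstacle.

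The gap is that you assert the $1/h$-parameterized form of the sampling primitive without proof, and you misattribute it to \cite{braverman2014public}, which is the private-coin-removal result, not a sampling result. The rejection-sampling primitive of \cite{braverman2011information} does not come in this clean form; their analysis produces the $5\sqrt{I}$ of the theorem precisely because sampling --- where Bob must discover a candidate he cannot compute and both sides run an acceptance/rejection loop over shared random proposals --- carries more overhead than the transmission problem of Lemma~\ref{u1}, where Alice knows the target string exactly. Were a sampling primitive with the clean $(1+\tfrac{1}{h})$ multiplicative bound available, the paper's detour through \cite{braverman2014public} and its extra $\log_2(I+O(1))$ term would be superfluous, which is strong evidence that it is not. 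Your outline, run with the actual Braverman--Rao primitive, does yield the stated $I+5\sqrt{I}+O(\log_2(1/\varepsilon))$ bound, so the high-level structure matches the cited theorem; but the claimed improvement to $2\sqrt{I}$ is not justified by the argument you give, and the paper achieves its (slightly weaker) improvement by a genuinely different reduction.
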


We will show that theorem \ref{thm:main} implies that
we can replace $5\sqrt{I}$ by about $2\sqrt{I}$ in this theorem.
We want transmit Alice's message $\Pi$ to Bob 
(who knows $Y$ and his randomness $R$) in many rounds
so that the expected communication length is small. 
By theorem \ref{thm:main} this
task can be solved with error $\varepsilon$ in expected communication 
\begin{equation}\label{eq-p}
H(\Pi|Y,R) + 2\sqrt{H(\Pi|Y,R)} + O\left(\log_2 \frac{1}{\varepsilon}\right).
\end{equation}

Assume first that the original protocol
$\pi$ uses only public randomness.
Then 
$$I = I(X:\Pi|Y, R) = H(\Pi|Y, R) - H(\Pi|X, Y, R) = H(\Pi|Y, R).$$
Indeed, $H(\Pi|X, Y, R) = 0$, since $\Pi$ is defined by $X, R$.
Thus~\eqref{eq-p} becomes 
 $$
I + 2\sqrt{I} + O\left(\log_2 \frac{1}{\varepsilon}\right)
$$
and we are done.

\oldver{In general case, when the original protocol
uses private randomness, $I$ can be much smaller 
than $H(\Pi|Y, R)$.}  
Fortunately, by the following theorem from \cite{braverman2014public} we can remove private coins from the protocol with only a slight increase in information complexity.
\begin{thm}
\oldver{For every one-round protocol $\pi$ and for every 
probability distribution $\mu$}
there is a one-round public-coin protocol 
$\pi^\prime$ with information complexity 
$IC_\mu(\pi) \le I + \log_2 (I + O(1))$ 
such that for each pairs of inputs $(x, y)$ 
\oldver{in the protocol $\pi'$} Bob outputs $\Pi^\prime$ for which $\Pi^\prime|X =x, Y = y$ and $\Pi|X = x, Y = y$ are identically distributed. 
\end{thm}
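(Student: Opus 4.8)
The plan is to reduce to the purely private-coin case and then replace Alice's private coins by a shared family of exponential clocks, one per message, used for correlated sampling.

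First I would fold the original shared randomness $R$ into the description of $\pi$ and argue for one fixed value of $R$ at a time (the additive $\log_2$ term then survives the averaging over $R$ by concavity of the logarithm). So it suffices to treat a one-round protocol in which Alice, on input $x$, forms her message $\Pi$ from private randomness only, so that $\Pi|X=x$ has some law $p_x$ on the (finite) message set — a law not depending on $y$, since $\Pi\perp Y|X$ in a one-round protocol — and $I=I(X:\Pi|Y)=\sum_{(x,y)}\mu(x,y)\,D(p_x\,\|\,p_y)$, where $p_y=\sum_x\Pr[X=x|Y=y]\,p_x$. One should bear in mind that merely declaring the private coins public does not work: if $\Pi=X\oplus(\text{private fair bit})$ then $I=0$, yet the obvious public-coin simulation has information complexity $1$; the coupling of the family $\{p_x\}_x$ realized by the public randomness must be chosen with care.

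The construction I would use is the Gumbel (exponential-race) sampler: let the public randomness be a family $(E_m)_m$ of independent $\mathrm{Exp}(1)$ variables indexed by the messages, let Alice on input $x$ send $\Pi^\prime:=\arg\min_m E_m/p_x(m)$, and let Bob output $\Pi^\prime$. The classical competing-exponentials computation gives $\Pr[\Pi^\prime=m\mid X=x]=p_x(m)$ for \emph{every} $x$, so $(X,Y,\Pi^\prime)$ and $(X,Y,\Pi)$ are identically distributed and Bob's output is distributed exactly as required, with no error (as in Lemma~\ref{u1} the protocol tree is allowed infinite depth — Alice reads an almost surely finite number of bits of the shared tape to evaluate the $\arg\min$). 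Since $\Pi^\prime$ is a function of $X$ and $(E_m)_m$, the ``$Y$-term'' of $IC_\mu(\pi^\prime)$ vanishes, and one application of the chain rule (using $(E_m)_m\perp(X,Y)$) together with $I(X:\Pi^\prime|Y)=I(X:\Pi|Y)=I$ yields
$$IC_\mu(\pi^\prime)=I(X:\Pi^\prime|Y,(E_m)_m)=H(\Pi^\prime|Y,(E_m)_m)=I+I(X:(E_m)_m|\Pi^\prime,Y).$$

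Thus the theorem reduces to the estimate $H(\Pi^\prime|Y,(E_m)_m)\le I+\log_2(I+O(1))$, and this is the step I expect to be the main obstacle: it is the one-shot correlated-sampling (``reverse Shannon'') estimate that is the technical core of \cite{braverman2014public}, and it is genuinely analytic rather than a formal manipulation. The intuition is that, given the clocks and $Y=y$, the winner $\arg\min_m E_m/p_x(m)$ depends on $x$ alone, and as $x$ ranges over the values compatible with $y$ it changes only rarely — precisely when the laws $p_x$ are close to $p_y$, which is exactly what ``$I=\mathbb{E}_{x,y}[D(p_x\|p_y)]$ is small'' means — while the residual $\log_2(I+O(1))$ is the slack from the geometric fluctuation inherent in correlated sampling (cf.\ the strong functional representation estimate of Li and El Gamal). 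Concretely I would compare the winner for target $p_x$ with the winner for target $p_y$ (a function of the clocks alone), bound the contribution of the rare disagreements, and push that contribution down to $O(\log I)$ uniformly; this last point is where the real work lies.
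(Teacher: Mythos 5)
The paper does not prove this theorem; it is imported verbatim from \cite{braverman2014public} with no argument given, so there is no ``paper's own proof'' to compare against. Judged on its own terms, your proposal is a sensible high-level reconstruction of the public-vs-private-coin result via one-shot correlated sampling: the reduction to a fixed value of $R$ with a concavity argument for the $\log$ term is sound, the warning that naively publicizing the private coins fails (the XOR example) is exactly the right caveat, the competing-exponentials fact $\Pr[\arg\min_m E_m/p_x(m)=m\mid X=x]=p_x(m)$ gives the exact distributional simulation required, and the chain-rule identity $IC_\mu(\pi')=H(\Pi'\mid Y,(E_m)_m)=I+I\bigl(X:(E_m)_m\mid\Pi',Y\bigr)$ is correct.

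The problem is that, as you yourself say, the entire content of the theorem sits in the one inequality you do not prove, $I\bigl(X:(E_m)_m\mid\Pi',Y\bigr)\le\log_2(I+O(1))$. Without it your argument establishes only $IC_\mu(\pi')=H(\Pi'\mid Y,(E_m)_m)$, which a priori could be as large as $H(X\mid Y)$ or $\log_2$ of the message alphabet rather than $I+\log_2(I+O(1))$; the fact that it collapses to nearly $I$ is precisely the non-trivial ``strong functional representation'' estimate. A heuristic that the winner ``changes only rarely'' as $x$ varies does not substitute for this. There is also a construction-level issue you should be alert to: in the Li--El Gamal argument the sharp $D+\log_2(D+1)+O(1)$ bound is proved for the Poisson functional representation, where the shared randomness is a marked Poisson process $(T_i,M_i)$ with $M_i$ drawn i.i.d.\ from a reference law and Alice transmits the winning \emph{index} $K$, whose conditional entropy is controlled because $K$ is essentially geometric with rate $e^{-D}$; the bound is on $H(K\mid Y)$, not on $H(M_K\mid Y,\text{process})$. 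Your plain i.i.d.\ $\mathrm{Exp}(1)$ arg-min, where the transcript is the message itself, is the special case of a uniform reference, and it is not clear that it gives the same $\log_2(I+O(1))$ slack in the conditional setting where the natural reference $p_y$ depends on $y$ and is unavailable to Alice. So the gap is not merely a deferred computation; pinning down the right shared-randomness structure and then proving the entropy bound is exactly the work that \cite{braverman2014public} (and, later, Li--El Gamal) had to do, and your proposal stops short of it.
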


Combining this theorem with our main result (theorem \ref{thm:main}), we obtain the following theorem.
\begin{thm}
\label{our_one_round_compression}
\oldver{For every one-round protocol $\pi$ and for every 
probability distribution $\mu$}
there is a public-coin protocol $\tau$ with expected length 
(with respect to $\mu$ and the randomness of $\tau$) at most $$I + \log_2(I + O(1)) +  2\sqrt{I + \log_2(I + O(1))} + O\left(\log_2\frac{1}{\varepsilon}\right)$$ such that for each pair of inputs $(x, y)$ in \oldver{the protocol $\tau$} 
Bob outputs $\Pi^\prime$ \oldver{with} 
$\delta\left(\left(\Pi|X = x, Y = y\right), \left(\Pi^\prime|X = x, Y = y\right)\right) \le \eps$
\end{thm}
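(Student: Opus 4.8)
The plan is to simply chain the two ingredients that have already been assembled. First I would invoke the theorem of Braverman and Raz (\cite{braverman2014public}, quoted just above) to replace the given one-round protocol $\pi$ by a one-round \emph{public-coin} protocol $\pi'$ whose transcript $\Pi'$ is, conditioned on each input pair $(x,y)$, distributed exactly as $\Pi$, and whose information complexity satisfies $IC_\mu(\pi') \le I + \log_2(I + O(1))$. Write $I' := I + \log_2(I+O(1))$ for this new information complexity. Since $\pi'$ is public-coin and one-round, Bob learns nothing about Alice's input, and $\Pi'$ is a deterministic function of $X$ and the shared randomness $R$; hence, exactly as in the public-coin case discussed in the text, $H(\Pi' \mid X, Y, R) = 0$ and therefore $I' = I(X:\Pi' \mid Y, R) = H(\Pi' \mid Y, R)$.

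Next I would apply Theorem~\ref{thm:main} to the transmission task in which ``Alice's input'' is the pair $(X,R)$-determined string $\Pi'$ and ``Bob's input'' is the pair $(Y,R)$; more precisely, we run the protocol of Theorem~\ref{thm:main} with the roles $X \leftarrow \Pi'$ and $Y \leftarrow (Y,R)$ (the shared randomness $R$ is available to both parties, so it can be folded into Bob's side-information, and the protocol of Theorem~\ref{thm:main} itself uses fresh public coins independent of everything). This yields a public-coin protocol $\tau$ in which Bob, for every input pair, recovers a string $\Pi''$ that equals $\Pi'$ with probability at least $1-\varepsilon$, so $\delta\big((\Pi'' \mid X=x,Y=y),(\Pi' \mid X=x,Y=y)\big) \le \varepsilon$, and hence also within statistical distance $\varepsilon$ of $(\Pi \mid X=x, Y=y)$ by the previous paragraph. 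The expected communication length of $\tau$ is, by Theorem~\ref{thm:main},
\[
H(\Pi' \mid Y, R) + 2\sqrt{H(\Pi' \mid Y, R)} + \log_2\!\left(\tfrac{1}{\varepsilon}\right) + O(1) \;=\; I' + 2\sqrt{I'} + O\!\left(\log_2 \tfrac{1}{\varepsilon}\right),
\]
which upon substituting $I' = I + \log_2(I+O(1))$ is exactly the bound claimed in Theorem~\ref{our_one_round_compression}.

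One small point to check is that the $\varepsilon$ of Theorem~\ref{thm:main} translates into the $\varepsilon$ (in statistical distance) of the conclusion: an error event ``$\Pi'' \ne \Pi'$'' of probability at most $\varepsilon$ conditioned on each $(x,y)$ immediately gives statistical distance at most $\varepsilon$ conditioned on each $(x,y)$, and composing with the exact-distribution guarantee of Braverman--Raz (zero statistical distance) the two errors add by the triangle inequality but the second is $0$, so the final distance is at most $\varepsilon$. A second routine point is that Theorem~\ref{thm:main} is stated for a joint distribution over finite sets, and here the relevant "joint distribution" is that of $(\Pi', (Y,R))$, which is finitely supported since all of $\mathcal{X},\mathcal{Y},\mathcal{R}$ and the transcript space are finite; so the theorem applies verbatim.

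I do not expect a genuine obstacle here — the content is entirely in the two cited theorems and in the observation (already made in the excerpt for the public-coin case) that a public-coin one-round protocol has $H(\Pi \mid X,Y,R)=0$, so that $H(\Pi\mid Y,R)$ coincides with the information cost. The only thing demanding a moment of care is bookkeeping: making sure the shared randomness $R$ of $\pi'$ is treated as part of Bob's known side-information when Theorem~\ref{thm:main} is invoked, and keeping the two sources of public randomness (that of $\pi'$ and that internal to the protocol of Theorem~\ref{thm:main}) independent, which is harmless since one may always take the public-coin tape to be long enough to carve out independent blocks.
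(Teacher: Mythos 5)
Your proof is correct and follows essentially the same route as the paper: apply the public-coin simulation theorem from \cite{braverman2014public} to replace $\pi$ by a public-coin $\pi'$ whose transcript has the same conditional distribution, observe that for a one-round public-coin protocol the information cost equals $H(\Pi'|Y,R)$, and then invoke Theorem~\ref{thm:main} to transmit $\Pi'$ to Bob (treating $(Y,R)$ as Bob's side-information). One small slip: the cited reference \cite{braverman2014public} is by Braverman and Garg, not ``Braverman and Raz.''
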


\color{black}
\section{A Lower Bounds for the Average-Case Communication}

Let $(X, Y)$ be a pair of jointly
distributed random variables. Assume that $\pi$ is a deterministic protocol
to transmit $X$ from Alice to Bob who knows $Y$. 
Let $\pi(X, Y)$ stand for the result output
by the protocol $\pi$ for input pair $(X,Y)$. 
We assume that for at least $1-\eps$ input pairs
this result is correct:
$$
\Pr[\pi(X, Y)\neq X)]\le\varepsilon.
$$ 
It is not hard to see that in this
case the expected communication length 
cannot be much less than $H(X|Y)$ bits on average.
Moreover, this applies for communication from Alice
to Bob only. 
\begin{proposition}
\label{Fano_bound}
For every deterministic 
protocol as above the expected communication from Alice
to Bob is at least 
$H(X|Y) - \varepsilon\log_2|\mathcal{X}| - 1.$ 
\end{proposition}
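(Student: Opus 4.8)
The plan is to lower-bound the expected communication from Alice to Bob by the entropy of Alice's message, and then to lower-bound that entropy by $H(X|Y)$ minus the Fano-type correction term. Let $M = M(X,Y)$ denote the concatenation of the bits that Alice sends to Bob during the execution of $\pi$ (so $M$ is the Alice-to-Bob part of the transcript; the bits Bob sends back are determined, together with Alice's bits, by $X$ and $Y$). Since $\pi$ is deterministic, $M$ is a prefix-free encoding: the set of possible Alice-to-Bob bit-strings forms a prefix code, because the protocol tree tells Bob (and us) exactly when Alice's contribution to the conversation has ended. Hence, for every fixed value $y$ of $Y$, the conditional Kraft inequality holds for the strings $\{M(x,y) : x\}$, and by the standard source-coding bound the expected length of $M$ conditioned on $Y=y$ is at least $H(M \mid Y = y)$. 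Averaging over $y$ gives that the expected communication from Alice to Bob is at least $H(M \mid Y)$.

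Next I would show $H(M \mid Y) \ge H(X \mid Y) - \varepsilon \log_2 |\mathcal{X}| - 1$. The point is that Bob's output $\pi(X,Y)$ is a deterministic function of $M$ and $Y$ (Bob computes it from his input $Y$ and the bits he received), so $H(\pi(X,Y) \mid Y) \le H(M \mid Y)$. Thus it suffices to bound $H(\pi(X,Y) \mid Y)$ from below, and for this I invoke Fano's inequality. Let $E$ be the indicator of the error event $\{\pi(X,Y) \neq X\}$, so $\Pr[E=1] = \Pr[\pi(X,Y)\neq X] \le \varepsilon$. Fano's inequality (in its conditional form, conditioning everything on $Y$) gives
$$
H(X \mid \pi(X,Y), Y) \le H(E) + \Pr[E=1]\cdot \log_2(|\mathcal{X}| - 1) \le 1 + \varepsilon \log_2 |\mathcal{X}|,
$$
using $H(E) \le 1$. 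Combining this with the chain rule $H(X\mid Y) = I(X : \pi(X,Y) \mid Y) + H(X \mid \pi(X,Y), Y) \le H(\pi(X,Y)\mid Y) + 1 + \varepsilon \log_2 |\mathcal{X}|$ yields $H(\pi(X,Y) \mid Y) \ge H(X \mid Y) - \varepsilon \log_2 |\mathcal{X}| - 1$, and chaining through the previous paragraph finishes the proof.

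The main obstacle, and the one step that needs to be argued carefully rather than waved through, is the claim that the Alice-to-Bob bit-strings $M(x,y)$ form a prefix code for each fixed $y$, so that the Kraft/source-coding bound applies and $\mathbb{E}[\,|M| \mid Y=y\,] \ge H(M\mid Y=y)$. This requires noting that in the protocol tree one can read off, at each node Alice owns, which of her bits belong to the current "Alice-message" and that the positions of Alice's bits within the transcript are themselves determined by the earlier transcript; since two distinct leaves reachable with the same $y$ correspond to transcripts that first diverge at an Alice-node, the corresponding Alice-strings are incomparable under the prefix order. One subtlety worth a sentence is that $|M|$ as I have defined it counts only Alice's bits, which is exactly the quantity the proposition talks about ("communication from Alice to Bob"), so no further accounting for Bob's bits is needed. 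Everything else is a routine application of entropy inequalities.
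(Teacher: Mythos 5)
Your proof is correct and follows essentially the same route as the paper's: lower-bound the expected Alice-to-Bob communication by $H(\Pi_A \mid Y)$ via the prefix-free/Kraft argument, then relate that quantity to $H(X\mid Y)$ through the output $\pi(X,Y)$ and invoke Fano's inequality. The only cosmetic difference is that you pass through $H(\pi(X,Y)\mid Y) \le H(M\mid Y)$ and apply Fano conditionally on $Y$, whereas the paper bounds $I(X:\Pi_A\mid Y)$ directly using $H(X\mid Y,\Pi_A) \le H(X\mid \pi(X,Y))$ and applies Fano unconditionally; both routings yield the identical bound.
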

\begin{proof}
Indeed, let $\Pi_A$ denote the concatenation of all bits sent by Alice. 
If Bob's input is fixed, then the set of all possible values of 
$\Pi_A$ forms a prefix-free code. Hence
$$\mathsf{E}\left[|\Pi_A| \left.\right| Y = y\right]\ge H(\Pi_A|Y = y)$$
and therefore
$$\mathsf{E}|\Pi_A| = \mathsf{E}_{y\sim Y}\mathsf{E}\left[|\Pi_A| \left.\right| Y = y\right] \ge \mathsf{E}_{y\sim Y} H(\Pi_A|Y = y) = H(\Pi_A|Y).$$
Consider $I(X:\Pi_A|Y)$. By definition $I(X:\Pi_A|Y) \le H(\Pi_A|Y)$. On the other hand we have
$$I(X:\Pi_A|Y) = H(X|Y) - H(X|Y, \Pi_A).$$
Notice that
$\pi(X, Y)$ is a function of $Y$ and $\pi_A$ 
(Bob's guess is based on $Y$ and on bits received from Alice)
and hence $H(X|Y, \Pi_A) \le 
H(X|\pi(X,Y))$. 
Since $\Pr[\pi(X, Y) \neq X]\le \varepsilon$, from Fano inequality it follows that
$$ 
H(X|\pi(X,Y))\le
1 + \varepsilon\log_2|\mathcal{X}|.$$
Therefore $\mathsf{E}|\Pi_A| \ge H(X|Y) - \varepsilon\log_2|\mathcal{X}| - 1.$ 
\end{proof}

There are random variables for which
this lower bound is tight. For instance,
let $Y$ be empty and let 
$X$ take the value $x\in\{0,1\}^n$
with probability $\eps/2^n$ (for all such $x$) 
and let $X=$ (the empty string)
with the remaining 
probability $1-\eps$.
Then the trivial protocol with no communication
solves the job with error probability $\eps$
and $H(X|Y)\approx \varepsilon\log_2|\mathcal{X}|$.

In this section we consider the following question:
are there a random variables $(X, Y)$, for which 
for every deterministic 
communication protocol the expected communication
is significantly larger than $H(X|Y)$, say close to the upper bound 
$H(X|Y) + 2\sqrt{H(X|Y)} + \log_2\left(\frac{1}{\eps}\right)$
of Theorem~\ref{thm:main}? 
Notice that from the proof of the theorem \ref{thm:main} 
it follows that there exists a \textit{deterministic protocol} 
which transmits $X$ from Alice to Bob using 
$H(X|Y) + 2\sqrt{H(X|Y)} + 
O\left(\log_2\left(\frac{1}{\varepsilon}\right)\right)$ bits on average 
and which guaranties that for at most $\varepsilon$-fraction of inputs 
(with respect to the distribution of $(X, Y)$) the transmission is incorrect. 
Indeed, for any choice of randomness the communication on each pair 
of inputs is bounded by lemma \ref{u1}. Thus we may fix random bits
so that the error probability is at most $\eps$.

Orlitsky showed that if no error is allowed
and the support of $(X, Y)$ is a Cartesian product, 
then every deterministic protocol 
must communicate $H(X)$ bits on average. 
\begin{lemma}\label{l:orlitsky}
Let $(X, Y)$ be a pair of jointly 
distributed random variables whose support is a Cartesian product. 
Assume that $\pi$ is a deterministic protocol, which transmits $X$ from Alice 
to Bob who knows $Y$ and
$$\Pr[\pi(X, Y) \neq X)] = 0.$$
Then the expected length of $\pi$ is at least $H(X)$.
\end{lemma}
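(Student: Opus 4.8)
The plan is to exploit the product structure of the support together with the fact that the protocol makes no error, following Orlitsky's original argument. Let the support of $(X,Y)$ be $\mathcal{X}_0 \times \mathcal{Y}_0$, so that for every $x \in \mathcal{X}_0$ and every $y \in \mathcal{Y}_0$ the pair $(x,y)$ has positive probability. Fix a value $y \in \mathcal{Y}_0$ and consider the restriction of $\pi$ to inputs with Bob's input equal to $y$. As $x$ ranges over $\mathcal{X}_0$, Alice's side of the conversation produces a binary string $\Pi_A(x,y)$; since at each Alice-vertex the next bit is a function of $x$ (and $y$, which is fixed), the map $x \mapsto \Pi_A(x,y)$ together with Bob's responses yields a sequence of strings that form a prefix-free set — no transcript with Bob's input $y$ can be a proper prefix of another. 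Crucially, because the protocol is error-free and the support is a product, Bob must be able to recover $x$ from $y$ and the transcript, so distinct $x$'s give distinct transcripts; hence the code $\{\Pi_A(x,y) : x \in \mathcal{X}_0\}$ (padded by Bob's bits) is a prefix-free code indexed by $\mathcal{X}_0$.

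The key point, and the place where the product assumption does the real work, is that this prefix-free code can be taken to \emph{not depend on $y$}. Here is the reason: suppose $y, y' \in \mathcal{Y}_0$ and suppose that for some $x$ the transcripts $\Pi(x,y)$ and $\Pi(x,y')$ first differ at a Bob-vertex — but a Bob-vertex bit depends only on Bob's input and the history, and if the histories agree up to that point, Bob sends the same bit whenever... the subtlety is that Bob's bit can legitimately differ for $y$ vs $y'$. So instead I would argue at the level of Alice: define, for each $x$, the string $c(x)$ obtained by listing exactly the bits Alice sends, in order, along the computation path on input $x$ — but this path depends on $y$ through Bob's bits, so $c(x)$ is a priori a function of both. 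The resolution, which is Orlitsky's observation, is to look at the full transcript $\Pi(x,y)$: the set $\mathcal{C} = \{\Pi(x,y) : x \in \mathcal{X}_0,\, y \in \mathcal{Y}_0\}$ is prefix-free as a whole (any two transcripts in the protocol tree are prefix-incomparable unless one leads to the other, and leaves are incomparable), and for fixed $x$ the string $\Pi(x,y)$ must determine $x$ (error-free recovery), so the map $(x,y) \mapsto \Pi(x,y)$ has the property that $\Pi(x,y) = \Pi(x',y')$ forces $x = x'$. Therefore, choosing for each $x$ any one $y(x) \in \mathcal{Y}_0$ and setting $w(x) = \Pi(x, y(x))$, we get an injective map $x \mapsto w(x)$ into a prefix-free set, so $\{w(x) : x \in \mathcal{X}_0\}$ is a prefix-free code for $\mathcal{X}_0$; by Kraft's inequality $\sum_x 2^{-|w(x)|} \le 1$.

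Now I would lower-bound the expected total length. For a fixed $x$, consider $\mathsf{E}[\,|\Pi(x,Y)|\,\mid X = x\,]$; I claim this is at least $|w(x)|$ — more precisely, I want the expected \emph{Alice-to-Bob} communication, so let me instead track only Alice's bits. For fixed $y$, $\{\Pi_A(x,y) : x \in \mathcal{X}_0\}$ is prefix-free and the assignment is injective in $x$ (Bob decodes $x$ from $y$ and Alice's bits), so $\mathsf{E}[\,|\Pi_A(X,y)|\,\mid Y = y\,] \ge H(X \mid Y = y)$ by the source coding lower bound, using that conditioned on $Y = y$ the distribution of $X$ is the conditional distribution and a prefix code's expected length is at least the entropy of the source it encodes. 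Averaging over $y \sim Y$ gives $\mathsf{E}|\Pi_A| \ge \mathsf{E}_{y}\, H(X \mid Y = y) = H(X \mid Y)$, and since the support is a product, $H(X\mid Y) = H(X)$ — this is the final identity to check, and it follows because for every $y \in \mathcal{Y}_0$ the conditional law of $X$ given $Y = y$ equals the marginal law of $X$... which actually requires that $Y$ and $X$ be \emph{independent}, not merely that their support is a product. So the honest version of the last step is: the support being a Cartesian product guarantees the code can be chosen $y$-independent (via the injectivity-of-$\Pi$ argument above), and then $\mathsf{E}|\Pi_A| \ge \sum_x \Pr[X = x]\,|w(x)| \ge H(X)$ directly by Kraft, \emph{without} routing through conditional entropies at all. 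I would write the proof that way to sidestep the independence issue.

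The main obstacle I anticipate is exactly this last subtlety: making rigorous that one may choose a single prefix-free code $\{w(x)\}$ valid simultaneously for all $y$. The clean way is the injectivity observation — error-free recovery plus the product support means no collision $\Pi(x,y) = \Pi(x',y')$ with $x \ne x'$ can occur (if it did, Bob on inputs $y$ and $y'$ respectively would be forced to output the same value for two different correct answers; the product support is what guarantees both $(x,y)$ and $(x',y')$ are legitimate inputs on which errorlessness applies). Once that is in hand, Kraft's inequality over the injective image inside the prefix-free set of all transcripts closes the argument, and everything else is the standard entropy lower bound for prefix codes.
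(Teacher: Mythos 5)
Your overall plan---build an injective map from $\mathcal{X}_0$ into the prefix-free set of transcripts and apply Kraft / the source-coding lower bound---is exactly the route the paper takes. But your justification of the injectivity claim has a genuine gap. You assert that a collision $\Pi(x,y)=\Pi(x',y')$ with $x\ne x'$ is impossible because ``Bob on inputs $y$ and $y'$ respectively would be forced to output the same value for two different correct answers.'' This is not so: in this paper's model Bob's output at a leaf $l$ is $\phi_l(y,\dots)$, a function of his own input, so reaching a common leaf does \emph{not} force a common output when Bob's inputs differ. What actually closes the argument is the rectangle property, and this is where the product support does its real work: if $(x,y)$ and $(x',y')$ both reach leaf $l$, then so does $(x,y')$; since the support is a Cartesian product, $(x,y')$ and $(x',y')$ are both legitimate inputs on which the protocol must be correct, and now Bob, with the \emph{same} input $y'$ and at the same leaf $l$, must produce a single value $\phi_l(y')$ equal to both $x$ and $x'$---contradiction. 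You invoke the product support, but you assign it the wrong job (merely ensuring $(x,y)$ and $(x',y')$ are in the support, which is already assumed); its essential role is to make the off-diagonal pair $(x,y')$ a valid input.

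A secondary issue: you say ``choosing for each $x$ any one $y(x)\in\mathcal{Y}_0$'' and later conclude $\mathsf{E}|\Pi|\ge\sum_x\Pr[X=x]\,|w(x)|$. That inequality needs $|\Pi(x,y)|\ge|w(x)|$ for every $y$ in the support, which fails for an arbitrary choice; you must take $y(x)$ to minimize the transcript length, i.e.\ take $w(x)$ to be the shallowest leaf reachable when $X=x$, exactly as the paper does. With both repairs your argument coincides with the paper's proof.
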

For the sake of completeness we provide a proof of this result in the Appendix.
The main result of this section states 
that there are random variables
$(X, Y)$ such that transmission of $X$ from Alice to Bob with 
error probability $\eps$ requires 
$H(X|Y) + \Omega\left(\log_2\left(\frac{1}{\varepsilon}\right)\right)$ 
bits on average.

The random variables $X,Y$ are specified by two parameters,
$\delta\in(0, 1/2)$ and $n\in\mathbb{N}$. 
Both random variables take values in 
$\{0, 1, \ldots, n\}$ and are distributed
as follows: $Y$ is distributed 
uniformly in $\{0, 1, \ldots, n\}$ and 
$X=Y$ with probability $1-\delta$ and 
$X$ is uniformly distributed in $\{0, 1, \ldots, n\}\setminus\{X\}$
with the remaining probability $\delta$. That is,
$$
\Pr[X = i, Y = j] = \frac{(1 - \delta)\delta_{ij} + \frac{\delta}{n}(1 - \delta_{ij})}{n + 1},
$$
where $\delta_{ij}$ stands for the Kronecker's delta. 
Notice that $X$ is uniformly distributed on $\{0, 1, \ldots, n\}$ as well.
A straightforward calculation reveals that 
$$\Pr[X = i|Y = j] = \frac{\Pr[X = i, Y = j]}{\Pr[Y = j]} = 
(1 - \delta -\frac\delta n)\delta_{ij} + \frac{\delta}{n}$$
and
$$H(X|Y) = (1 - \delta)\log_2\left(\frac{1}{1 - \delta}\right) + \delta\log_2\left(\frac{n}{\delta}\right) = \delta\log_2 n + O(1).$$

We will think of $\delta$ as a constant, say $1/4$. For
one-way protocol we are able to show that 
communication length must be close to $\log n$, which 
is about $1/\delta$ times larger than $H(X|Y)$:
\begin{proposition}
Assume that $\pi$ is a one-way deterministic protocol, 
which transmits $X$ from Alice to Bob who knows $Y$ and
$$\Pr[\pi(X, Y) \neq X)] \le \varepsilon.$$
Then the expected length of $\pi$ is at least 
$\left(1 - \frac{\varepsilon}{\delta}\right)\log_2(n + 1) - 2$.
\end{proposition}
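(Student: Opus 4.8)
The plan is to lower-bound the expected communication of $\pi$, which for a one-way deterministic protocol is exactly $\mathsf{E}\,|\Pi_A|$, where $\Pi_A=\Pi_A(X)$ is Alice's message; this message is a function of $X$ alone, and the possible messages are the leaves of the protocol tree, so they form a prefix-free binary code. Write $\ell(x)$ for the leaf reached on input $x$ and $|\ell(x)|$ for its depth, so that $\sum_\ell 2^{-|\ell|}\le 1$ by Kraft's inequality. First I would pass to the off-diagonal event $\{X\neq Y\}$, which has probability $\delta$: since the overall error is at most $\eps$, conditioned on $X\neq Y$ Bob still decodes correctly with probability at least $1-\eps/\delta$. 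A direct computation from the given formula for the distribution shows that, conditioned on $X\neq Y$, the variable $X$ is uniform on $\{0,\dots,n\}$ and, given $X=x$, the variable $Y$ is uniform on $\{0,\dots,n\}\setminus\{x\}$.

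For each $x$ let $q_x$ be the probability that Bob outputs $x$ conditioned on $X=x$ and $X\neq Y$; by the previous paragraph $\tfrac{1}{n+1}\sum_x q_x\ge 1-\eps/\delta$, so $W:=\sum_x q_x\ge(n+1)(1-\eps/\delta)$. The key step is a per-leaf estimate. At a fixed leaf $\ell$ Bob's output is a fixed function $\phi_\ell(\cdot)$ of his input, so the sets $\{y:\phi_\ell(y)=x\}$ are pairwise disjoint over $x$ and together cover $\{0,\dots,n\}$; since $q_x=\tfrac1n\,|\{y\neq x:\phi_{\ell(x)}(y)=x\}|$, summing over the inputs routed to a common leaf $\ell$ gives $\sum_{x:\,\ell(x)=\ell} q_x\le\tfrac{n+1}{n}$. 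This is a version, robust to error, of the one-way Cartesian-product obstruction behind Lemma~\ref{l:orlitsky}: a single short message cannot be ``re-used'' for too many inputs without ruining Bob's decoding.

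Finally I would combine this with Kraft's inequality and convexity. Grouping the inputs by their leaf, $\sum_x q_x 2^{-|\ell(x)|}=\sum_\ell 2^{-|\ell|}\sum_{x:\,\ell(x)=\ell}q_x\le\tfrac{n+1}{n}\sum_\ell 2^{-|\ell|}\le\tfrac{n+1}{n}$. Jensen's inequality applied to the convex map $t\mapsto 2^{-t}$ with weights $q_x/W$ then yields $\sum_x q_x|\ell(x)|\ge W\log_2(\tfrac{Wn}{n+1})$, and since $q_x\le 1$ we conclude $\mathsf{E}\,|\Pi_A|=\tfrac{1}{n+1}\sum_x|\ell(x)|\ge\tfrac{W}{n+1}\log_2(\tfrac{Wn}{n+1})$. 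In the only nontrivial regime one has $n(1-\eps/\delta)\ge 1$ (otherwise $1-\eps/\delta<1/n$ and the asserted bound is already negative, since $\log_2(n+1)\le 2n$), so the right-hand side is increasing in $W$ and may be replaced by its value at $W=(n+1)(1-\eps/\delta)$, namely $(1-\eps/\delta)\log_2(n(1-\eps/\delta))$. An elementary estimate using $\log_2 n\ge\log_2(n+1)-1$ and $t\log_2 t\ge-1$ for $t\in(0,1]$ converts this into $(1-\eps/\delta)\log_2(n+1)-2$, as claimed.

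I expect the main difficulty to be quantitative rather than conceptual. A cruder argument --- first throwing away the inputs on which Bob is usually wrong, then observing that the surviving assignment of inputs to leaves is $O(1)$-to-one --- is easy to run, but it wastes constant factors both in the coefficient of $\log_2 n$ (one gets roughly $1-2\eps/\delta$) and in the additive term. Carrying the fractional weights $q_x$ all the way through the Kraft/Jensen computation, instead of rounding them to $0$ or $1$, is precisely what keeps the coefficient $1-\eps/\delta$ sharp and the additive loss down to $2$.
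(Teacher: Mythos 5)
Your proof is correct, and it exploits the same combinatorial fact as the paper — that at any fixed leaf Bob's output is a fixed function of his input, so the $x$'s routed to a common leaf must be ``certified'' by pairwise disjoint sets of $y$'s — but packages it differently. The paper uses the dual form of this observation: for each fixed $y$, at most $S$ values of $x$ can be correctly decoded (one per leaf), which gives $S\ge(n+1)(1-\eps/\delta)$ directly; it then lower-bounds the expected length by $H(\Pi)$, notes $\Pr[\Pi=l_i]\ge1/(n+1)$, and invokes Lemma~\ref{entropy_bound}. You instead retain a fractional correctness weight $q_x$ per input, prove the per-leaf budget $\sum_{x:\ell(x)=\ell}q_x\le(n+1)/n$, and close with Kraft's inequality and Jensen. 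Both routes yield the same coefficient $1-\eps/\delta$ and additive loss $2$; yours avoids the auxiliary entropy lemma at the cost of the Kraft--Jensen computation, and, as you correctly observe, carrying the fractional $q_x$ through (rather than rounding to a $0/1$ survivor set) is exactly what prevents the constants from degrading. Your conditioning on $\{X\neq Y\}$, the monotonicity of $W\mapsto\frac{W}{n+1}\log_2\frac{Wn}{n+1}$ on $W\ge(n+1)/n$, and the dispatch of the vacuous regimes ($\eps\ge\delta$ or $1-\eps/\delta<1/n$) are all sound, so there is no gap.
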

\begin{proof}
Let $S$ be the number of leafs in $\pi$. For each $j\in\{0, 1, \ldots, n\}$ 
$$\#\left\{i\in\{0, 1, \ldots, n\}\left.\right| \pi(i, j) = i \right\}\le S.$$
Hence the error probability $\eps$
is at least $(n + 1 - S)\frac{\delta}{n}$. 
This implies that 
$$S \ge n\left(1 - 
\frac{\varepsilon}{\delta}\right) + 1\ge (n + 1)\left(1 - \frac{\varepsilon}{\delta}\right).$$
Let $\Pi(X)$ denote the leaf Alice and Bob reach in $\pi$
(since the protocol is one-way, the leaf depends only on $X$).
The expected length of $\Pi(X)$ is at least $H(\Pi)$. 
Let $l_1, l_2, \ldots, l_S$ be the list of all leaves in 
the support of the random variable $\Pi(X)$.
As $X$ is distributed uniformly, we have
$$\Pr[\Pi = l_i]\ge \frac{1}{n + 1}$$
for all $i$. The statement follows from 
\begin{lemma}
\label{entropy_bound}
Assume that $p_1, \ldots, p_k, q_1, \ldots, q_k\in(0, 1)$ satisfy
$$\sum\limits_{i = 1}^kp_i = 1,$$ $$\forall i\in\{1, \ldots, k\}\qquad p_i\ge q_i.$$
Then
$$\sum\limits_{i = 1}^kp_i\log_2\frac{1}{p_i}\ge \sum\limits_{i = 1}^kq_i\log_2\frac{1}{q_i} - 2.$$
\end{lemma}
The proof of this technical lemma is deferred to the Appendix.
The lemma implies that
$$
\begin{aligned}
H(\Pi) &= \sum\limits_{i = 1}^S \Pr[\Pi = l_i]\log_2\left(\frac{1}{\Pr[\Pi = l_i]}\right)\\ &\ge \frac{S}{n + 1}\log_2(n + 1) - 2\ge\left(1 - \frac{\varepsilon}{\delta}\right)\log_2(n + 1) - 2.
\end{aligned}
$$
\end{proof}

The next theorem states that for any fixed $\delta$
every two-way deterministic protocol with error
probability $\eps$ must communicate 
about $H(X|Y)+(1-\delta)\log_2(1/\eps)$ bits on average.
\begin{thm}
Assume that $\pi$ is a deterministic protocol 
which transmits $X$ from Alice o Bob who knows $Y$ and
$$\Pr[\pi(X, Y) \neq X)] \le \varepsilon.$$
Then the expected length of $\pi$ is at least 
$$(1 - \delta - \delta/n)\log_2\left(\frac{\delta}{\varepsilon + \delta/n}\right) + (\delta - 2\varepsilon)\log_2(n + 1) - 2\delta.$$
\end{thm}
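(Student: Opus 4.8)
The plan is to bound the expected length $\sum_{(x,y)}\mu(x,y)\,\ell_{\Pi(x,y)}$ by splitting the sum according to whether $x=y$: the diagonal pairs carry total weight $1-\delta$ and I will show they contribute roughly $(1-\delta)\log_2(\delta/\varepsilon)$, while the off-diagonal pairs carry weight $\delta$ and contribute roughly $\delta\log_2 n$; since $H(X|Y)=\delta\log_2 n+O_\delta(1)$, adding the two pieces gives a bound of the claimed shape. Throughout, write $N=n+1$, let $\Pi$ be the (random) transcript, and for a leaf $t$ of $\pi$ let $\ell_t$ be its depth and $R_t=A_t\times B_t$ the combinatorial rectangle of inputs reaching $t$ (using that $\pi$ is deterministic); Kraft's inequality gives $\sum_t 2^{-\ell_t}\le 1$.

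\textbf{Diagonal part.} For $j\in\{0,\dots,n\}$ set $L_j=\Pi(j,j)$ and $s_t=|A_t\cap B_t|$. By the rectangle property $\{j:L_j=t\}=A_t\cap B_t$, so $\sum_t s_t=N$ and the diagonal contribution to the expected length is $\frac{1-\delta}{N}\sum_t s_t\ell_t$. Combining Kraft with Gibbs' inequality, $\sum_t s_t\ell_t\ge N\log_2 N-\sum_t s_t\log_2 s_t$, and Jensen's inequality for $\log_2$ (with weights $s_t/N$) gives $\sum_t s_t\log_2 s_t\le N\log_2\frac{\sum_t s_t^2}{N}$, whence $\sum_t s_t\ell_t\ge N\log_2\frac{N^2}{\sum_t s_t^2}$. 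So I need $\sum_t s_t^2$ to be small, and this is where the error guarantee enters: inside $R_t$, for each $y\in A_t\cap B_t$ all $s_t$ pairs $(x,y)$ with $x\in A_t\cap B_t$ lie in $R_t$ and receive the same Bob-output, so at most one of the $s_t-1$ off-diagonal pairs among them is correct; hence $R_t$ contains at least $s_t(s_t-2)$ off-diagonal error pairs. These pairs are disjoint across leaves and each has probability $\frac{\delta}{nN}$, so $\frac{\delta}{nN}\sum_t s_t(s_t-2)\le\varepsilon$, i.e. $\sum_t s_t^2\le N(\frac{\varepsilon n}{\delta}+2)$. Substituting, the diagonal contribution is at least $(1-\delta)\log_2\frac{N\delta}{\varepsilon n+2\delta}$, which is the first term of the statement up to routine lower-order adjustments.

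\textbf{Off-diagonal part.} Fix $j$ and let $e_j$ be the number of $i\ne j$ for which Bob errs on $(i,j)$. If $i\ne i'$ and Bob is correct on both $(i,j)$ and $(i',j)$, then $\Pi(i,j)\ne\Pi(i',j)$, since Bob's output is a function of $j$ and the transcript while the two correct outputs $i,i'$ differ. Thus the $n-e_j$ transcripts of the correctly-answered off-diagonal pairs with $Y=j$ are distinct, so in the distribution of $\Pi(i,j)$ with $i$ uniform over $\{0,\dots,n\}\setminus\{j\}$ each of them has probability at least $\frac1n$; Lemma~\ref{entropy_bound} then bounds the entropy of this distribution below by $(1-\frac{e_j}{n})\log_2 n-2$, and since transcripts form a prefix-free code the expected transcript length (over uniform $i$) is at least this entropy, i.e. $\sum_{i\ne j}\ell_{\Pi(i,j)}\ge (n-e_j)\log_2 n-2n$. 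Averaging over $j$ and using $\frac{\delta}{nN}\sum_j e_j\le\varepsilon$, the off-diagonal contribution $\frac{\delta}{nN}\sum_j\sum_{i\ne j}\ell_{\Pi(i,j)}$ is at least $(\delta-\varepsilon)\log_2 n-2\delta$. Adding this to the diagonal part and reconciling constants ($\log_2 n$ versus $\log_2(n+1)$, the $s_t\le 2$ terms, the denominator $\varepsilon n+2\delta$ versus $\varepsilon+\delta/n$) yields the bound in the statement.

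\textbf{Main obstacle.} Everything except one step is the standard Kraft/Gibbs/Jensen bookkeeping together with Lemma~\ref{entropy_bound}. The one genuinely non-routine ingredient is the inequality $\sum_t s_t^2\le N(\varepsilon n/\delta+2)$: it expresses the fact that merging many diagonal pairs into one leaf (a large rectangle) is paid for by a \emph{quadratic} number of off-diagonal errors, so the $N$ heavy diagonal pairs cannot be compressed into substantially fewer than $\delta/\varepsilon$ leaves, forcing about $\log_2(\delta/\varepsilon)$ bits on the diagonal — which is exactly the excess of the expected length over $H(X|Y)$.
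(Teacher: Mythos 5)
Your proof is correct and reaches a bound of the same shape, but it is organized differently from the paper's. The paper's proof works entirely on the entropy side: it begins by passing to $H(\Pi)$ via the prefix-code inequality, writes the leaf probability as a convex combination
$\Pr[\Pi = l_i] = (1-\delta-\delta/n)\frac{d_i}{n+1} + (\delta+\delta/n)\frac{h_iw_i}{(n+1)^2}$,
invokes concavity of entropy to split $H(\Pi)$ into $H(\{d_i/(n+1)\})$ and $H(\{h_iw_i/(n+1)^2\})$, bounds the first via Renyi entropy and the quadratic error count, and bounds the second via $h_i\le n+1$, the estimate $\sum h_i\ge(1-\varepsilon/\delta)(n+1)^2$, and Lemma~\ref{entropy_bound}. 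You instead split the expected length itself into diagonal and off-diagonal contributions (with weights $1-\delta$ and $\delta$ rather than $1-\delta\pm\delta/n$), and you apply the prefix-code inequality twice rather than once: once globally for the diagonal term, and once per row $j$ for the off-diagonal term. Your diagonal step is the paper's Renyi bound in disguise — Kraft plus Gibbs gives $\sum s_t\ell_t \ge N\log_2 N - \sum s_t\log_2 s_t$ and Jensen then yields exactly $N\log_2(N^2/\sum s_t^2)$, which is $N$ times the Renyi entropy of $\{s_t/N\}$ — and your bound $\sum_t s_t^2 \le N(\varepsilon n/\delta + 2)$ comes from the same quadratic error count as the paper's bound on $\sum d_i^2$. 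Your off-diagonal step is genuinely different: rather than aggregating over leaves via the rectangle heights $h_i$, you fix a row $j$, note that correct off-diagonal transcripts are pairwise distinct, and apply Lemma~\ref{entropy_bound} to the per-$j$ transcript distribution; this gives the coefficient $\delta-\varepsilon$ rather than the paper's $\delta-2\varepsilon$, at the cost of $\log_2 n$ in place of $\log_2(n+1)$. Both arguments are sound; the paper's concavity split buys a single clean appeal to $E[\ell]\ge H(\Pi)$ and a unified leaf-indexed accounting, while your direct split is arguably more elementary and makes the role of the two input regimes (diagonal versus off-diagonal) more transparent. As you note, the final constants do not literally match the theorem's displayed expression, so to present this as a proof of the stated theorem you would need to either adjust your bookkeeping to the paper's weights or restate the theorem with your (equivalent up to lower-order terms) constants.
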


The lower bound in this theorem is quite complicated 
and comes from its proof. To understand this bound
assume that $\delta$ is a constant, say $\delta=1/4$,
and $\frac{1}{n}\le\varepsilon \le \frac{1}{\log_2n}$. Then 
$H(X|Y)=(1/4)\log_2 n+O(1)$ and 
the lower bound becomes
\begin{align*}
\left(1 - \frac 1 4 - \frac{1}{4n}\right)\log_2\left(\frac{\frac{1}{4}}{\eps + \frac{1}{4n}}\right) + (1/4 - 2\eps)\log_2(n + 1) - \frac{1}{2}\\
\end{align*}
Condition $\frac{1}{n}\le \eps$ implies
that the first term is equal to 
$$(3/4)\log_2\left(\frac{1}{\eps}\right) - O(1).$$ 
Condition $\eps \le \frac{1}{\log_2 n}$ implies that the seconds term 
is equal to $$(1/4)\log_2 n - O(1).$$
Therefore under these conditions the lower bound becomes
$$(1/4)\log_2 n + (3/4)\log_2\left(\frac{1}{\eps}\right) - O(1) = H(X|Y) + (3/4)\log_2\left(\frac{1}{\eps}\right) - O(1).$$
\begin{proof}
Let $\Pi=\Pi(X, Y)$ denote the leaf Alice and Bob reach in the
protocol $\pi$ for input pair $(X,Y)$. As we have seen,
the expected length of communication is at least the entropy $H(\Pi(X,Y))$. 
Let $l_1, \ldots, l_S$ denote all the 
leaves in the support of the random variable $\Pi(X, Y)$. 
The set $\{(x, y)\left.\right| \Pi(x, y) = l_i\}$ 
is a combinatorial rectangle $R_i\subset\{0, 1, \ldots, n\}\times\{0, 1, \ldots, n\}$. Imagine $\{0, 1, \ldots, n\}\times\{0, 1, \ldots, n\}$ as a table 
in which Alice owns columns and Bob owns rows. 
Let $h_i$ be the height of $R_i$ and $w_i$ be the width of $R_i$.
Let $d_i$ stand for the number of diagonal elements in $R_i$ 
(pairs of the form $(j, j)$). 
By definition of $(X, Y)$ we have
\begin{equation}
\label{1}
\Pr[\Pi(X,Y) = l_i] = \frac{(1 - \delta)d_i}{n + 1} + \frac{\delta (h_i w_i - d_i)}{n(n + 1)}.\end{equation}
The numbers
$\{\Pr[\Pi(X,Y) = l_i]\}_{i = 1}^S$ define 
a probability distribution over the set $\{1, 2, \ldots, S\}$
and its entropy equals $H([\Pi(X,Y))$. Equation \eqref{1} 
represents this distribution as a weighted sum of the following
distributions: 
$
\left\{\frac{d_i}{n + 1}\right\}_{i = 1}^S$ and $\left\{\frac{h_i w_i}{(n + 1)^2}\right\}_{i = 1}^S$. That is, Equation \eqref{1} 
implies that
$$\{\Pr[\Pi = l_i]\}_{i = 1}^S = (1 - \delta - \delta/n) \left\{\frac{d_i}{n + 1}\right\}_{i = 1}^S + (\delta + \delta/n)\left\{\frac{h_i w_i}{(n + 1)^2}\right\}_{i = 1}^S.$$

Since entropy is concave, we have
\begin{equation}
\begin{aligned}
H(\Pi) &= H\left(\{\Pr[\Pi = l_i]\}_{i = 1}^S \right) \\&\ge (1 - \delta - \delta/n)H\left(\left\{\frac{d_i}{n + 1}\right\}_{i = 1}^S\right) + (\delta + \delta/n)H\left(\left\{\frac{h_i w_i}{(n + 1)^2}\right\}_{i = 1}^S\right)
\end{aligned}
\end{equation}
The lower bound of the theorem
follows from lower bounds of the 
entropies of these distributions.

\emph{A lower bound for 
$H\left(\left\{\frac{d_i}{n + 1}\right\}_{i = 1}^S\right)$.}
In each row of $R_i$ there is at most 1 element $(x, y)$, 
for which $\pi(x, y) = x$. 
The rectangle $R_i$ consists of
$d_i$ diagonal elements and hence there are at least $d_i^2 - d_i$ 
elements $(x, y)$ in $R_i$ for which $\pi(x, y)\neq x$. Summing over all $i$ we get
$$\varepsilon \ge \sum\limits_{i = 1}^S \frac{\delta(d_i^2 - d_i)}{n(n + 1)}$$
and thus
$$\sum\limits_{i = 1}^S \left(\frac{d_i}{n + 1}\right)^2 \le \frac{\varepsilon + \delta/n}{\delta}.$$
Since Renyi entropy is a lower bound for the Shannon entropy, we have
$$
H\left(\left\{\frac{d_i}{n + 1}\right\}_{i = 1}^S\right) \ge \log_2\left(\frac{1}{\sum\limits_{i = 1}^S \left(\frac{d_i}{n + 1}\right)^2}\right) \ge \log_2\left(\frac{\delta}{\varepsilon + \delta/n}\right).
$$
In $R_i$, there are at most $h_i$ good pairs 
(for which $\pi$ works correctly). 
At most $d_i$ of them has probability $\frac{1 - \delta}{n + 1}$. Hence
$$
\Pr[\Pi = l_i, \pi(X, Y) = X] \le \frac{(1 - \delta)d_i}{n + 1} + \frac{\delta(h_i - d_i)}{n(n + 1)}
$$
and
$$
\begin{aligned}
1 - \varepsilon &\le \Pr[\pi(X, Y) = X] = \sum\limits_{i = 1}^S \Pr[\Pi = l_i, \pi(X, Y) = X]\\
&\le \sum\limits_{i = 1}^S\left(\frac{(1 - \delta)d_i}{n + 1} + \frac{\delta(h_i - d_i)}{n(n + 1)}\right) = 1 - \delta - \delta/n + \frac{\delta}{n(n + 1)} \sum\limits_{i = 1}^S h_i.
\end{aligned}
$$
The last inequality implies that 
$$\sum\limits_{i = 1}^S h_i \ge(1 - \varepsilon/\delta) (n + 1)^2.$$

\emph{A lower bound for 
$H\left(\left\{\frac{h_i w_i}{(n + 1)^2}\right\}_{i = 1}^S\right)$.} 
Since $h_i \le n + 1$, we have
$$
\begin{aligned}
\sum\limits_{i = 1}^S \frac{h_i w_i}{(n + 1)^2}\log_2\left(\frac{(n + 1)^2}{h_i w_i}\right) &\ge \sum\limits_{i = 1}^S \frac{h_i w_i}{(n + 1)^2}\log_2\left(\frac{(n + 1)^2}{(n + 1) w_i}\right)\\
&= -\log_2(n + 1) + \sum\limits_{i = 1}^S h_i \frac{w_i}{(n + 1)^2}\log_2\left(\frac{(n + 1)^2}{w_i}\right).
\end{aligned}
$$
Obviously $\frac{w_i}{(n + 1)^2} \ge \frac{1}{(n + 1)^2}$. By lemma \ref{entropy_bound} we get 
$$
\begin{aligned}
\sum\limits_{i = 1}^S h_i \frac{w_i}{(n + 1)^2}\log_2\left(\frac{(n + 1)^2}{w_i}\right) &\ge \left(\sum\limits_{i = 1}^Sh_i\right)\frac{1}{(n + 1)^2}\log_2\left((n + 1)^2\right) - 2\\ &\ge (2 - 2\varepsilon/\delta)\log_2(n + 1) - 2.
\end{aligned}
$$
Thus
$$
H\left(\left\{\frac{h_i w_i}{(n + 1)^2}\right\}_{i = 1}^S\right) \ge (1 - 2\varepsilon/\delta)\log_2(n + 1) - 2.$$
\end{proof}

\bibliographystyle{acm}
\bibliography{ref}

\appendix
\section{The proof of Lemma~\ref{l:orlitsky}}

Let $\mathcal{X}\times\mathcal{Y}$ stand for the support of $(X, Y)$. Fix $x\in\mathcal{X}$. Consider the set of all possible leafs Alice and Bob may reach in $\pi$ when $X = x$. Let $l_x$ be the leaf of minimal depth from this set. Denote the depth of $l_x$ by $d(l_x)$. Notice that the expected length of the protocol $\pi$ is at least $\mathsf{E}_{x\sim X} d(l_x)$.

Suppose that for some $x_1, x_2\in\mathcal{X}$, $x_1\neq x_2$ we have $l_{x_1} = l_{x_2}$. It means that there exists $y_1, y_2\in\mathcal{Y}$ such that when $X = x_1, Y = y_1$ and when $X = x_2, Y = y_2$ Alice and Bob reach the same leaf $l_{x_1}$. From the rectangle property it follows that when $X = x_1, Y = y_2$ Alice and Bob reach $l_{x_1}$ too. Hence when $X = x_1, Y = y_2$ and when $X = x_2, Y = y_2$, Bob outputs the same answer, which is contradiction.

Thus $l_x$ defines bijection from the set of all possible values of $X$ to some prefix-free set of binary strings. Hence $\mathsf{E}_{x\sim X} d(l_x)\ge H(X)$.

\section{The proof of Lemma~\ref{entropy_bound}}
The 
function $f(x) = x\log_2\frac{1}{x}$ increases on $[0, e^{-1}]$
and its maximum value is $e^{-1}\log_2e<1$. 
Indeed,
$$
f^\prime(x) = \frac{1}{\ln2}(-1 - \ln x) = 
\frac{\ln\left(\frac{1}{ex}\right)}{\ln2}\ge 0$$
when $x\in[0, e^{-1}]$. Since $\sum\limits_{i = 1}^kp_i = 1$, we have
$$\#\left\{i\in\{1, \ldots, k\}\left.\right| p_i > e^{-1}\right\} < e.$$
The left hand 
side of this inequality is an integer
hence $\#\left\{i\in\{1, \ldots, k\}\left.\right| p_i > e^{-1}\right\} \le 2$. 
Thus we conclude
$$
\begin{aligned}
\sum\limits_{i = 1}^kp_i\log_2\frac{1}{p_i} &= \sum\limits_{p_i \le e^{-1}}p_i\log_2\frac{1}{p_i} + \sum\limits_{p_i > e^{-1}}p_i\log_2\frac{1}{p_i}\\
&\ge \sum\limits_{p_i \le e^{-1}}q_i\log_2\frac{1}{q_i} + \sum\limits_{p_i > e^{-1}}0 \\
&\ge \sum\limits_{p_i \le e^{-1}}q_i\log_2\frac{1}{q_i} + \sum\limits_{p_i > e^{-1}}\left(q_i\log_2\frac{1}{q_i} - 1\right) \ge \sum\limits_{i = 1}^kq_i\log_2\frac{1}{q_i} - 2.
\end{aligned}
$$

\section{Random variables, for which Theorem \ref{thm:main} may be tight}

We finish this paper with the example of random variables $(X, Y)$, 
for which we believe that the upper bound from Theorem \ref{thm:main} is tight. Let $H_n$ be the $n$-th harmonic number:
$$H_n = \sum\limits_{k = 1}^n \frac{1}{k} = \ln n + O(1).$$

Let $X$ take values in $\{1, 2, \ldots, n\}$ and $Y$ take values in $S_n$, 
the set of all permutations of the set $\{1,\dots,n\}$. 
The distribution of $X, Y$ is defined as follows:
$$\Pr[X = i, Y = \sigma] = \frac{1}{\sigma(i)H_n n!}.$$
This formula implies that 
$H(X|Y = \sigma)$ does not depend on $\sigma\in S_n$ and 
equals
$$\sum\limits_{i = 1}^n\frac{\log_2(iH_n)}{iH_n} = 
\frac{\log_2 n}{2} + O(\log\log n).$$
Thus $H(X|Y)= \frac{\log_2 n}{2} + O(\log\log n)$.

We conjecture that every deterministic protocol, which transmits $X$ from Alice to Bob who knows $Y$ with error probability $\eps < 1/\log_2 n$, communicates at least $\frac{\log_2 n}{2} + \Omega(\sqrt{\log_2(n)})$ bits on average.

\end{document}